\newcommand{\ab}{$(\alpha,\beta)$}
\newcommand{\abd}{$(\alpha,\beta)$-Dynamics }
\newtheorem{prop}{Property}
\begin{document}
\title{Threshold-based Network Structural Dynamics\thanks{This is an extended version of a post-print submitted at SIROCCO 2021, containing all proofs. The final authenticated version is available online at https://doi.org/10.1007/978-3-030-79527-6\_8. Evangelos Kipouridis received funding from the European Union's Horizon 2020 research and innovation program under the Marie Skłodowska-Curie grant agreement No 801199. \includegraphics[height=0.03\textwidth, width=0.07\textwidth]{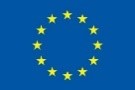} Evangelos Kipouridis is also supported by Thorup's Investigator Grant 16582, Basic Algorithms Research Copenhagen (BARC), from the VILLUM Foundation. \includegraphics[height=0.04\textwidth, width=0.07\textwidth]{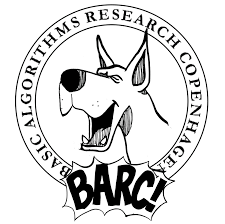}}}
%
%
\author{Evangelos Kipouridis\inst{1}\orcidID{0000-0002-5830-5830} \and
Paul G. Spirakis\inst{2,3}\orcidID{0000-0001-5396-3749} \and
Kostas Tsichlas\inst{3}\orcidID{0000-0003-4107-9520}}
\authorrunning{E. Kipouridis et al.}
%
\institute{\scriptsize{Basic Algorithms Research Copenhagen (BARC), University of Copenhagen, Copenhagen, Denmark, Universitetsparken 1, 2100\\ \email{kipouridis@di.ku.dk} \and
Department of Computer Science, University of Liverpool, Ashton Str., Liverpool L69 3BX, UK\\ \email{P.Spirakis@liverpool.ac.uk} \and
Computer Engineering and Informatics Department, University of Patras, Patras, Greece
\email{\{ktsichlas,spirakis\}@ceid.upatras.gr}}}
\maketitle              
\begin{abstract}
The interest in dynamic processes on networks is steadily rising in recent years. In this paper, we consider the $(\alpha,\beta)$-Thresholded Network Dynamics ($(\alpha,\beta)$-Dynamics), where $\alpha\leq \beta$, in which only structural dynamics (dynamics of the network) are allowed, guided by local thresholding rules executed by each node. In particular, in each discrete round $t$, each pair of nodes $u$ and $v$ that are allowed to communicate by the scheduler, computes a value $\mathcal{E}(u,v)$ (the potential of the pair) as a function of the local structure of the network at round $t$ around the two nodes. If $\mathcal{E}(u,v) < \alpha$ then the link (if it exists) between $u$ and $v$ is removed; if $\alpha \leq \mathcal{E}(u,v) < \beta$ then an existing link among $u$ and $v$ is maintained; if $\beta \leq \mathcal{E}(u,v)$ then
a link between $u$ and $v$ is established if not already present.

The microscopic structure of $(\alpha,\beta)$-Dynamics appears to be simple, so that we are able to rigorously argue about it, but still flexible, so that we are able to design meaningful microscopic local rules that give rise to interesting macroscopic behaviors. Our goals are the following: a) to investigate the properties of the $(\alpha,\beta)$-Thresholded Network Dynamics and b) to show that $(\alpha,\beta)$-Dynamics is expressive enough to solve complex problems on networks. 

Our contribution in these directions is twofold. We rigorously exhibit the claim about the expressiveness of $(\alpha,\beta)$-Dynamics, both by designing a simple protocol that provably computes the $k$-core of the network as well as by showing that $(\alpha,\beta)$-Dynamics are in fact Turing-Complete. Second and most important, we construct general tools for proving stabilization that work for a subclass of $(\alpha,\beta)$-Dynamics and prove speed of convergence in a restricted setting. 

\keywords{Network Dynamics \and Stabilization.}
\end{abstract}

\section{Introduction} \label{sec:intro}
\vspace{-0.2cm}
The interplay between the microscopic and the macroscopic in terms of emergent behavior shows an increasing interest. The most striking examples come from biological systems that seem to form macroscopic structures out of local interactions between simpler structures (e.g., computation of shortest paths by Physarum Polycephalum \cite{journals/n/NakagakiYT00}, or of maximal independent sets by the fly's nervous system \cite{ynoenz11}). The underlying common characteristic of these systems is the emergent behavior at the macroscopic level out of simple local interactions at the microscopic level. This is one of the reasons why in the last years there has been a surge in the analysis and design of elementary and fundamental primitives in distributed systems under restrictive assumptions on the model \cite{10.1145/3388392.3388403}.
In some of these examples, the dynamic processes are purely structural with respect to the network. These examples include network generation models \cite{Barabasi509,Watts1998Collective}, community detection \cite{DBLP:conf/kdd/ZhangWWZ09}, "life-like" cellular automata \cite{doi:10.1142/S0219525911003050}, robot motion \cite{saidani2004} and go all the way up to fundamental physics as a candidate model for space \cite{2002:NKS:513738,wolfram2020class}. In view of this recent trend, a stream of work is devoted to the study of such dynamics per se, without a particular application in mind (e.g., \cite{DBLP:journals/nc/Gadouleau20}).
Motivated by such a plethora of examples, we study the stabilization properties of protocols that affect solely the structure of networks. 

Henceforth, we will use the term \textit{dynamic network} to represent networks that change due to some process, although in the literature one can find other terms like adaptive networks, time-varying networks, evolving networks and temporal networks that essentially refer to the same general idea of time-dependent networks w.r.t. structure and states.
The study of the processes that drive dynamic networks and their resulting properties has been the focus of many different fields but in general one can discern between two distinct viewpoints without excluding overlap: \textbf{a) complex systems viewpoint (physics, sociology, ecology, etc.):} the main focus is on modeling (e.g., differential/difference equations, cellular automata, etc. - see \cite{Sayama2009}) and qualitative analysis (by means of mean field approximations, bifurcation analysis etc.). The main questions here are of qualitative nature and include phase transitions, complexity of system behavior, etc. Rigorous analysis is not usual and simulation is the main tool for providing results. \textbf{b) computational viewpoint (mainly computer science and communications):} the main focus is on the computational capabilities (computability/complexity) of dynamic networks in various settings and with different assumptions. The  main approach in computer science is based on rigorous proofs while in communications it is based on experimentation.

When designing local rules aiming at some particular global/emergent behavior, it is usually difficult, or at the very least cumbersome, to prove correctness \cite{10.1145/3388392.3388403}. This is why most studies in complex systems of this sort are based on experimental evidence for their correctness. 
Thus, it is very important to prove general results about protocols, and not argue about them in a case-by-case fashion. 
In this paper, we study a dynamic network driven by a simple protocol that is executed by each node in a synchronous manner. The protocol is the same for all nodes and can only affect the structure of the network and not the state of edges or nodes. The locality of the protocol is defined with respect to the available interactions for each node that are defined by a scheduler.
We define the \ab-Dynamics in Section~\ref{sec:preliminaries} and we also discuss related work. In Section~\ref{sec:min}, we discuss a particular protocol that computes the $\alpha$-core and the $(\alpha-1)$-crust \cite{DBLP:conf/gd/BatageljMZ99} of an arbitrary provided network. 
In Section~\ref{sec:degree} we provide guarantees on the speed of stabilization for a subclass of \abd while in Section~\ref{sec:Local_Rules} we provide a proof of stabilization for a more general class of such protocols. In this way, we provide general results for \abd that may be directly applied elsewhere, e.g., in the case of restricted Network Automata \cite{doi:10.1142/S0219525911003050}.  In Section~\ref{sec:turing} we prove that \abd is Turing-Complete. Finally, in Section~\ref{sec:extensions} we discuss some extensions of the proposed model and we conclude in Section~\ref{sec:conclusion}.

\section{Preliminaries} \label{sec:preliminaries}
\vspace{-0.2cm}
Assume that an undirected simple network $G^{(0)}=(V,E^{(0)})$ evolves over time (discrete time) based on a set of rules. We represent the network at time $t$ by $G^{(t)}=(V,E^{(t)})$. We denote the {\em distance} between two nodes $u,v$ in $G^{(t)}$ as $d^{(t)}(u,v)$. Let $n=|V|$, $m^{(t)}=|E^{(t)}|$ and let $N_{G^{(t)}}(u)$ be the set of all neighbors of node $u$ and $d_{G^{(t)}}(u)$ be the {\em degree} of node $u$ in network $G^{(t)}$. We define $\left| E^{(t)}(u,v) \right|$ to be the number of edges between $u$ and $v$ at time $t$ (either $0$ or $1$), and more generally $\left| E^{(t)}(U) \right|$ to be the number of edges between nodes in the set $U \subseteq V$ at time $t$. It follows that $\left| E^{(t)}(N_{G^{(t)}}(u) \cap N_{G^{(t)}}(v)) \right|$ is the number of edges between common neighbors of $u$ and $v$ at time $t$. Let $G^{(t)}[S]$ represent the induced subgraph of the node set $S \subseteq V$ in $G^{(t)}$.
The {\em potential} of a pair of nodes $u$ and $v$ at round $t$ is a function related to this pair and is represented by $\mathcal{E}_{G^{(t)}}^{(t)}(u,v):G^{(t)}[S] \rightarrow \mathbb{R}$, for some $S\subseteq V$. The domain of the potential is the induced subgraph $G^{(t)}[S]$ defined by the set of nodes $S$ that are at the local structure around nodes $u$ and $v$. This local structure is defined explicitly by the potential function. In this paper, $S$ consists of all nodes that are within constant distance from $u$ or from $v$ (the constant is $1$ throughout the paper, except for Section~\ref{sec:turing} where it is $3$). We write $\mathcal{E}^{(t)}(u,v)$ or $\mathcal{E}(u,v)$ when the network and the time we are referring to are clear from the context. An example of such a function defined in \cite{DBLP:conf/kdd/ZhangWWZ09} that is used to detect communities in networks is the following:
\[
\mathcal{E}(u,v)=| N_{G^{(t)}}(u) \cap N_{G^{(t)}}(v) | + | E^{(t)}(u,v) | + | E(N_{G^{(t)}}(u) \cap N_{G^{(t)}}(v)) |
\]
The potential is equal to the number of common neighbors between $u$ and $v$ plus the number of edges between $u$ and $v$ ($0$ or $1$) plus the number of edges between the common neighbors of $u$ and $v$. 

Finally, let $f: \mathbb{N}^2\rightarrow \mathbb{R}$ be a continuous function having the following two properties: i) Non-decreasing, that is $f(x,y+\epsilon)\geq f(x,y)$ for $\epsilon>0$ (similarly $f(x+\epsilon,y)\geq f(x,y)$) and ii) Symmetric, $f(x,y)=f(y,x)$. The second property is related to the fact that we consider undirected networks. We call these functions \textit{proper}.

\subsection{\abd~-~Thresholded Network Dynamics} \label{ssec:Network_System}
\vspace{-0.2cm}
Informally, the \ab-Thresholded Network Dynamics (\ab-Dynamics henceforth) in its general form is a discrete-time dynamic stateless network of agents $G^{(t)}=(V,E^{(t)})$. It is stateless because the dynamics driven by the protocol depend only on the structure of the network and not on state information stored in each node/edge. The dynamics involve the edges of the network while the set of agents is static. All interactions are pairwise and are defined by a scheduler. For each interaction, the two involved nodes execute a protocol that affects the edge between them. The execution of the protocol and all communication is carried out on the network $G^{(t)}$, while the scheduler is responsible for the determination of the interactions that activate the execution of the protocol between pairs of nodes in $G^{(t)}$. 

The protocol is \textit{consistent}, in the sense that it comes to the same decision about the existence of the edge between $u$ and $v$, both when executed by $u$ and by $v$. This requires the potential of an arbitrary edge $(u,v)$ to be \textit{computationally symmetric}, in the sense that $\mathcal{E}(u,v)$ is the same when computed in $u$ and in $v$. The execution evolves in synchronous discrete time rounds. In the following, the edge $e^{(t)}$ is also used as a boolean variable. In particular, when $e^{(t)}=0$ then $e^{(t)}\notin E^{(t)}$, while $e^{(t)}=1$ means that $e^{(t)}\in E^{(t)}$. Let $\alpha$ and $\beta$ be parameters that correspond to a lower and an upper threshold, respectively.
Initially, the network $G^{(0)}$ is given as well as the constant thresholds $\alpha$ and $\beta$. Formally, \abd is a triple $(G^{(0)},\mathcal{S},\mathcal{A}(\alpha,\beta))$ defined as follows:
\begin{itemize}
\item $G^{(0)}=(V,E^{(0)}):$ A network of nodes $V$ and edges $E^{(0)}$ between nodes at time $0$. This is the network where the dynamic process concerning the edges is performed. Each node $v\in V$ has a distinct id and maintains a routing table with all its edges. 

\item $\mathcal{S}:$ The scheduler that contains the pairwise interactions between nodes. We represent it by a possibly infinite series of sets of pairwise interactions $C^{(t)}$. Each set $C^{(t)}$ contains the pairwise interactions between nodes activated at time step $t$ in the network $G^{(t)}$. An interaction between nodes $u$ and $v$, assumes direct communication between $u$ and $v$ irrespective of whether $u$ and $v$ are connected by an edge in $G^{(t)}$. In the following, by slightly abusing notation, we will refer to $C^{(t)}$ as the scheduler for time step $t$. 

\item $\mathcal{A}(\alpha,\beta):$ The protocol executed in each round by each node participating in the pairwise interactions defined by the scheduler $C^{(t)}$ in order to update network $G^{(t)}$ to network $G^{(t+1)}$. The \abd is defined for the following family of protocols:
\begin{quote}
Protocol $\mathcal{A}(\alpha,\beta)$ at node $u$ for a pairwise interaction $(u,v)\in C^{(t)}$: 

\hspace{0.3cm} Compute the potential $\mathcal{E}(u,v)$.

\hspace{0.6cm} 1. If $\mathcal{E}(u,v)<\alpha$ then edge $(u,v)^{(t+1)}=0$.

\hspace{0.6cm} 2. If $\alpha \leq \mathcal{E}(u,v)<\beta$ then edge $(u,v)^{(t+1)}=(u,v)^{(t)}$. 
 
\hspace{0.6cm} 3. If $\mathcal{E}(u,v)\geq \beta$ then edge $(u,v)^{(t+1)}=1$. 

\end{quote}
\end{itemize}

The computational capabilities of each node are similar to a LOG-space Turing machine. Each node has two different memories, the input memory as well as the working memory. The input memory contains the local structural information of the network necessary for the computation of the potential function at node $u$. The potential function reads the input memory and its value is computed by using the working memory. We allow only protocols that require polynomial time w.r.t. the size of the input memory keeping the working memory logarithmic (asymptotically) in size w.r.t. the size of the input memory. 

The complexity of the protocol depends solely on the definition of the potential function, since the rest of the protocol are simple threshold comparisons. Similarly to dynamics \cite{10.1145/3388392.3388403} - although no relevant formal definition exists \cite{10.5555/3039686.3039745} - we require our protocol to be simple and lightweight and to realize natural, local and elementary rules subject to the constraint that structural dynamics are considered. To this end, we require the potential function to respect the following constraints: 
\begin{enumerate}
\item The potential function has access to a small constant distance $c$ away from the two interacting nodes.
\item The potential function must be indistinguishable with respect to the nodes - thus not allowing for special nodes (e.g., leaders) \cite{10.5555/3039686.3039745} \footnote{Therefore, we only use identifiers of nodes for analysis purposes}. 
\item The potential function must be network-agnostic, in the sense that it is designed without having any access to the topology of $G^{(0)}$.
\end{enumerate}
These restrictions combined with the computational capabilities of nodes do not allow the protocol to use shortcuts for computation in terms of hardwired information in the potential function (node ids) or in terms of replacing large subgraphs by other subgraphs. 

In each round, the protocol is executed by the nodes that participate in the pairwise interactions $(u,v)$ determined by the scheduler. A pairwise interaction between nodes $u$ and $v$ requires the computation of the potential between the two nodes and then a decision is made as for the edge between them based on the thresholds $\alpha$ and $\beta$. Each round of the computation for node $u$ (symmetrically for $v$) is divided into the following phases: (1) $u$ sends messages to its local neighborhood (with the exception of $v$, if edge $(u,v)$ exists) requesting information related to the computation of the potential function, (2) $u$ receives the requested information and stores it in the input memory, (3) $u$ sends its information to $v$, (4) $u$ receives $v$'s information and stores it in the input memory, (5) $u$ computes the potential using the working memory and (6) it decides as for the edge $(u,v)$ w.r.t. thresholds. 

The consistency of the protocol guarantees that the result of its execution is the same for $u$ and $v$. In accordance to the L\textsc{ocal} model, there is no restriction on the size of the messages. Finally, direct communication is assumed (in phases (3) and (4)) between the interacting nodes $u$ and $v$ irrespective of the existence of edge $(u,v)$.
In the example of the potential function given in Section~\ref{sec:preliminaries}, each round executes at $u$ (symmetrically for $v$) as follows: (1) $u$ sends messages to all its neighbors, (2) $u$ receives messages carrying information about its neighbors and their edges, (3) $u$ sends its gathered information to $v$, (4) $u$ receives the gathered information from $v$, (5) $u$ computes the potential between $u$ and $v$ and (6) it makes a decision about edge $(u,v)$ and appropriately updates its connection information. 

\abd is stateless, in the sense that the dynamics driven by the algorithm $\mathcal{A}$ consider only the structure of the network. No states that are stored at nodes or edges are considered in the dynamic evolution expressed by \ab-Dynamics. Although nodes have memory to store connections to their neighbors that change due to the dynamic process and to store the additional information required for the computation of the potential function, no additional states are used to impose changes in the network. 
As a result, the network $G^{(t)}$ completely defines the configuration of the system at time $t$.
We say that $G^{(t)}$ \textit{yields} $G^{(t+1)}$, when a transition takes place from $G^{(t)}$ to $G^{(t+1)}$ after time step $t$, represented as 
$G^{(t)} \xrightarrow{C^{(t)}}G^{(t+1)}$, which is the result of the $\mathcal{A}$ protocol for all pairwise interactions encoded in $C^{(t)}$.
Similarly, we write $G^{(t)} \rightsquigarrow G^{(t')}$, for $t'>t$, if there exists a sequence of transitions $G^{(t)}\xrightarrow{C^{(t)}} G^{(t+1)} \xrightarrow{C^{(t+1)}}\cdots\xrightarrow{C^{(t'-1)}}G^{(t')}$.
An \textit{execution} of \abd is a finite or infinite sequence of configurations $G^{(0)},G^{(1)},G^{(2)},\ldots$ such that for each $t$, $G^{(t)}$ yields $G^{(t+1)}$, where $G^{(0)}$ is the initial network. 

We say that the algorithm \textit{converges} or \textit{stabilizes} when $\exists t$ such that $\forall t'>t$ it holds that $G^{(t)}=G^{(t')}$, meaning that the network does not change after time $t$. The \textit{output} of the \abd is the network that results after stabilization has been reached.
The time complexity of the protocol is the number of steps until stabilization.
The time complexity of the protocol heavily depends on $C^{(t)}$. If, for example, there exists a $T$ where for all $t\geq T$ it holds that $C^{(t)}$ is always the null set, then the algorithm stabilizes although it would not stabilize for a different choice of $C^{(t)}$. To avoid stalling, we employ the \textit{weak fairness condition} \cite{10.1145/3289137.3289150,Angluin2006} that essentially states that all pairs of nodes interact infinitely often, thus imposing that the scheduler cannot avoid a possible change in the network.
In the case of the protocol described in Section~\ref{sec:min}, we will be very careful as to the definition of $C^{(t)}$ w.r.t. time complexity while for our stabilization theorems we either assume a particular $C^{(t)}$ or allow it to be arbitrary. However, in the latter case we do not claim bounds on the time complexity, only eventual stabilization. Note that it is not our goal in this paper to solve the problem of termination detection.

At this point, a discussion on the scheduler $\mathcal{S}$ is necessary. The scheduler $C^{(t)}$ at time $t$ supports parallelism since it is a set of pairwise interactions that has size at most $\binom{n}{2}$. Thus, many pairwise interactions may be activated in each step. For example, consider the case where all $\binom{n}{2}$ possible edges are contained in $C^{(t)}$. This means that simultaneously the potential is computed for all possible pairwise interactions and the edges are updated analogously. In~\cite{DBLP:conf/kdd/ZhangWWZ09}, a serialization of this case is used to detect communities in networks. In general, we may assume anything about the scheduler (adversarial, stochastic, etc.). Arguing about an arbitrary set of pairwise interactions for each $t$ is the most general case, since $\mathcal{A}$ can make no assumption at all about the pairwise interactions that will be activated within each round but the fairness condition must be employed in order to argue about stabilization.  

On a more technical note, the scheduler has two different but not necessarily mutually exclusive uses. On the one hand, the scheduler models restrictions set by the environment on the interactions (e.g., random interactions in a passive model). On the other hand, it is used as a tool for analysis reasons, to describe the communication links that the protocol $\mathcal{A}$ enforces on $G^{(t)}$ (e.g., when a node communicates with all nodes at distance $2$). The scheduler cannot and should not cheat, that is to be used in order to help $\mathcal{A}$ carry out the computation. In this paper, we present some general results w.r.t. the choice of the scheduler. For example, $C^{(t)}$ may be adversarial for all $t$, satisfying the fairness condition, while our algorithms are still able to stabilize (see Sections \ref{sec:min} and~\ref{sec:Local_Rules}).
Although \abd may seem to be a rather restricting setting, the freedom in defining the potential and the parameters $\alpha$ and $\beta$ allow us to have very rich behavior - in fact, we show that \abd is Turing-Complete.

\subsection{Related Work} \label{ssec:related}
\vspace{-0.2cm}
The main work on dynamic networks stems either from computer science or from complex systems and is inherently interdisciplinary in nature. In the following, we only highlight results that are directly related to ours (a more extensive discussion can be found in \cite{Michail2016}). In computer science, a nice review of the dynamic network domain \cite{DBLP:journals/cacm/MichailS18} proposes a partitioning of the current literature into three subareas: Population Protocols (\cite{Angluin2006}, \cite{DBLP:journals/dc/AngluinAER07}), Powerful Dynamic Distributed Systems (e.g., \cite{O'Dell:2005:IDH:1080810.1080828}) and models for Temporal Graphs (e.g., \cite{doi:10.1080/17445760.2012.668546}). 
\abd can be compared to Population Protocols, where anonymous agents with only a constant amount of memory available interact with each other and are able to compute functions, like leader election. Their scheduler determines the set of pairs of nodes among which one will be chosen for computation at each time step. The choice is made by a scheduler either arbitrarily (adversarial scheduler) or uniformly at random (uniform random scheduler). The uniform scheduler is used for designing various protocols due to the probabilistic accommodations for analysis it provides. The major differences to our approach are with respect to dynamics and the scheduler. Population protocols study state dynamics while in our case we study stateless structural dynamics. In addition, in our approach, the scheduler consists of a set of pairwise interactions, thus allowing for many computations between pairs of nodes during a time step (parallel time). This parallelism of the scheduler may "artificially" reduce the number of rounds but it can also complicate the protocol leading to interesting research questions. Similarly to population protocols, the notion of dynamics \cite{10.5555/3039686.3039745,10.1145/3388392.3388403} that refers to distributed processes that resemble interacting particle systems considers simple and lightweight protocols on states of agents. \abd could be cast in such a framework as purely structural dynamics that on the one hand supports simple, uniform and lightweight protocols while on the other hand requires necessarily the communication of structural information between nodes.
In the same manner, motivated by population protocols, the Network Constructors model also studies state dynamics that affect the structure of the network resulting in structural dynamics as well, and thus it is much closer to \ab-Dynamics. In \cite{Michail2016,Michail2017NetworkCA} the authors study what stable networks can be constructed (like paths, stars, and more complex networks) by a population of finite-automata. Among other complexity related results they also argue that the Network Constructors model is Turing-Complete. Our main differences to the network constructors model are the following: 
\begin{enumerate}
    \item Our motivation comes from the complex systems domain as well, and thus we are more interested in as general as possible convergence/stabilization theorems apart from particular network constructions (like the $\alpha$-core in our case).
    \item They use states for the structural dynamics while in our case the dynamics are stateless. This means that Network Constructors use states that change according to the protocol, which in turn drive the structural changes of the network (coupled dynamics). In our case, we use only the knowledge of the structure of the network to make structural changes. 
    \item They always start from a null network while we start from an arbitrary one. 
\end{enumerate}
A similar notion is graph relabeling systems \cite{graphrelabelling99}, where one chooses a subgraph and changes it based on certain rules. These systems are usually applied on static graphs but they have also been applied to dynamic graphs \cite{10.1007/978-3-642-11476-2_11}. The focus in this case is to \textit{impose properties on the dynamic graphs so that a particular computation is possible}, assuming adversarial dynamic graphs. 
\abd is also related - in fact can easily simulate - to graph generating models. The Barab{\'a}si--Albert model \cite{Barabasi509} can be simulated by simply setting $\mathcal{A}$ to add an edge between two nodes in $G^{(t)}$ for each interacting pair in $C^{(t)}$. These interacting pairs in $C^{(t)}$ are specified based on the stochastic preferential-attachment mechanism. Similarly, the Watts-Strogatz model \cite{Watts1998Collective} can be simulated by starting with a regular ring lattice and then in each step set the appropriate edges stochastically in $C^{(t)}$ to rewire them.

In the study of complex systems, one of the tools used for modeling is cellular automata. Cellular automata use simple update rules that give rise to interesting patterns \cite{DBLP:journals/corr/ArrighiD16}, \cite{DBLP:journals/corr/abs-1711-10920}. Structurally Dynamic Cellular Automata (SDCA) that couples the topology with the local site 0/1 value configuration were introduced in \cite{Ilachinski2018}. They formalize this notion and move to an experimental qualitative analysis of its behavior for various parameters. They left as an extension (among others) of SDCA purely structural CA models in which there are no value configurations as it holds in the \abd studied in this paper. A model for coupling topology with functional dynamics was given in \cite{doi:10.1142/S0219525911003050}, termed Functional Network Automata (FNA), and was used as a model for a biological process. They also defined the restricted Network Automata (rNA), which as \abd allows only for stateless structural network dynamics. rNA forces every possible pair of interactions to take place, meaning that for all $t$ it holds that $C^{(t)}$ contains all $\binom{n}{2}$ possible edges of the $n$ nodes. All their results are qualitative and are based on experimentation. By using the machinery built in Section~\ref{sec:Local_Rules} we show that for the family of protocols we consider, rNA always stabilizes.
To further stimulate the reader as for the need of looking at \ab-Dynamics, the author in \cite{saidani2004} looked at modular robots as an evolving network with respect only to their topology. The author defined a graph topodynamic, which in fact is a local program common to all modules of the robot, that turns a tree topology to a chain topology conjecturing that stabilization is always achieved but to the best of our knowledge it is still unresolved.

\section{Taking the Minimum} \label{sec:min}
\vspace{-0.2cm}
As a motivation and exhibition of \ab-Dynamics, we first discuss the following interesting example. We define the potential of a pair of nodes $u$ and $v$ as $\mathcal{E}(u,v)=\min\{d_{G^{(t)}}(u),d_{G^{(t)}}(v)\}$, that is the potential is equal to the minimum degree of the two nodes. This potential function respects all constraints described in \ref{ssec:Network_System}. 

It is interesting to notice the similarity of our process, and the process of acquiring the $k-core$ (or complementary the $(k-1)-crust$) of a simple undirected graph \cite{DBLP:conf/gd/BatageljMZ99,SzekeresW}.

\begin{definition}
The $k$-core $H$ of a graph $G$ is the unique maximal subgraph of $G$ such that $\forall u\in H$ it holds that $deg_H(u)\geq k$. All nodes not in $H$ form the $(k-1)$-crust of $G$.
\end{definition}
The $k$-core plays an important role in studying the clustering structure of networks \cite{mgpv20}. In \cite{DBLP:conf/gd/BatageljMZ99} it was proved that the following process efficiently computes the $k$-core of a graph:
\begin{lemma} \label{lem:a-core}
Given a graph $G$ and a number $k$, one can compute $G$'s $k$-core by repeatedly deleting all nodes whose degree is less than $k$.
\end{lemma}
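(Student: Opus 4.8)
The plan is to show that the peeling process terminates and that the graph it leaves behind is exactly the $k$-core $H$. Termination is immediate: vertices are only ever removed from a finite graph, so after at most $n$ rounds the process stabilizes. Write $G_0 = G, G_1, \dots, G_T = H'$ for the sequence of induced subgraphs it produces, where $G_{i+1}$ is obtained from $G_i$ by deleting every vertex whose degree \emph{in $G_i$} is less than $k$ (the argument below is insensitive to whether these deletions happen all at once or one at a time). The process halts only when no low-degree vertex remains, so $H'$ is an induced subgraph of $G$ with $\deg_{H'}(u) \ge k$ for every $u \in H'$; that is, $H'$ satisfies the defining degree condition of the $k$-core.

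It then remains to prove $H' = H$ by two containments. For $H \subseteq H'$ I would show by induction on $i$ that $V(H) \subseteq V(G_i)$: the base case is trivial, and for the step, given $u \in V(H)$ and $V(H) \subseteq V(G_i)$, the fact that $H$ is an induced subgraph gives $\deg_{G_i}(u) \ge \deg_H(u) \ge k$, so $u$ is not deleted in forming $G_{i+1}$; hence $V(H) \subseteq V(G_T) = V(H')$. For the reverse containment I would use maximality: the family of vertex sets $S$ for which $G[S]$ has minimum degree at least $k$ is closed under union (a vertex of $S_1$ retains all of its $G[S_1]$-neighbours inside $G[S_1 \cup S_2]$), so this family has a unique maximal element $V(H)$, which therefore contains every member of the family; since $V(H')$ is such a member, $V(H') \subseteq V(H)$, i.e.\ $H' \subseteq H$. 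The two containments give $H' = H$.

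The argument is elementary and I do not anticipate a real obstacle; the only points demanding care are the bookkeeping around a round of simultaneous deletions (the inductive invariant must say ``no $k$-core vertex has been removed from the current graph $G_i$'', not merely from a single deletion step) and making the closure-under-union observation explicit, since that is precisely what justifies speaking of \emph{the} unique maximal subgraph when deducing $H' \subseteq H$.
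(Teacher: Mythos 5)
Your proof is correct. Note that the paper does not actually prove this lemma itself---it is quoted from Batagelj and Zaver\v{s}nik \cite{DBLP:conf/gd/BatageljMZ99}---so there is no in-paper argument to compare against; your argument (termination, the stopping condition giving minimum degree at least $k$, the inductive invariant that no $k$-core vertex is ever peeled, and closure under union to justify uniqueness of the maximal subgraph and hence the reverse containment) is the standard one and is complete. The two points you flag as needing care, the round-based invariant and the explicit closure-under-union step, are exactly the right ones, and both are handled correctly.
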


The following theorem states that stabilization to the $k$-core is achieved for an arbitrary scheduler $\mathcal{S}$. Furthermore, the stabilization occurs after $O(m)$ rounds of changes in the network, where $m$ is the number of edges in $G$. Note that this is not the time complexity of the protocol, since there may be many idle rounds between rounds with changes, depending on the scheduler.

\begin{theorem}
When $\mathcal{E}(u,v)=\min\{d_{G^{(t)}}(u),d_{G^{(t)}}(v)\}$, \abd for any value of $\alpha\leq n-1 < \beta$ and any scheduler $\mathcal{S}$, stabilizes in a network where all isolated nodes form the $(\alpha-1)$-crust and the rest the $\alpha$-core of $G^{(0)}$ in $O(m)$ rounds where changes happen, where $m$ is the number of edges in $G^{(0)}$.
\end{theorem}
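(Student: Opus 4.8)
The plan is to combine two monotonicity observations with an induction that mirrors the peeling process of Lemma~\ref{lem:a-core}.

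\textbf{Step 1 (no additions; the round bound).} First I would observe that every degree in an $n$-vertex network is at most $n-1<\beta$, so $\mathcal{E}(u,v)=\min\{d_{G^{(t)}}(u),d_{G^{(t)}}(v)\}<\beta$ always holds and Case~3 of $\mathcal{A}(\alpha,\beta)$ never fires: edges are only ever deleted. Hence $G^{(t+1)}\subseteq G^{(t)}\subseteq G^{(0)}$ for all $t$, and $d_{G^{(t)}}(u)\le d_{G^{(0)}}(u)$ for every $u$. This already yields the stated complexity: a round in which the network changes must delete at least one of the at most $m$ edges of $G^{(0)}$, and no deleted edge ever returns, so there are at most $m$ rounds where changes happen.

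\textbf{Step 2 (the $\alpha$-core is preserved).} Let $H$ be the $\alpha$-core of $G^{(0)}$; by maximality $H=G^{(0)}[V(H)]$ is induced and has minimum degree $\ge\alpha$. I would prove $H\subseteq G^{(t)}$ by induction on $t$: given $H\subseteq G^{(t)}$, any edge $(u,v)\in E(H)$ satisfies $d_{G^{(t)}}(u)\ge d_H(u)\ge\alpha$ and symmetrically for $v$, so $\mathcal{E}(u,v)\ge\alpha$, Case~1 does not fire on $(u,v)$, and (by Step~1) $(u,v)$ persists. Consequently every edge of $G^{(0)}$ either lies inside $V(H)$ — hence is an edge of $H$, and survives forever — or has at least one endpoint in the $(\alpha-1)$-crust $C=V\setminus V(H)$.

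\textbf{Step 3 (the crust becomes isolated).} This is the technical heart. Define the peeling layers $L_1=\{u:d_{G^{(0)}}(u)<\alpha\}$ and, inductively, $L_{i+1}=\{u\notin L_1\cup\cdots\cup L_i:\ u \text{ has degree} <\alpha \text{ in } G^{(0)}[V\setminus(L_1\cup\cdots\cup L_i)]\}$. By Lemma~\ref{lem:a-core} the $L_i$ partition $C$, with at most $n$ nonempty layers. I would show by induction on $i$ that there is a finite time after which every node of $L_1\cup\cdots\cup L_i$ is isolated and stays isolated. Base case: if $u\in L_1$ then $d_{G^{(t)}}(u)<\alpha$ at all $t$, so whenever weak fairness activates a pair $(u,w)$ with $w$ a current neighbour, $\mathcal{E}(u,w)<\alpha$ and Case~1 deletes that edge; since $u$ is incident to at most $d_{G^{(0)}}(u)$ edges over the whole execution and each dies the first time it is scheduled, $u$ is isolated after finitely many rounds. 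Inductive step: if $L_1\cup\cdots\cup L_{i-1}$ is isolated from time $T_{i-1}$ onward, then for $t>T_{i-1}$ and $u\in L_i$ we have $N_{G^{(t)}}(u)\subseteq N_{G^{(0)}}(u)\setminus(L_1\cup\cdots\cup L_{i-1})$, whence $d_{G^{(t)}}(u)<\alpha$ by definition of $L_i$; from $T_{i-1}$ on, $u$ behaves exactly like an $L_1$ node and becomes isolated after finitely many further rounds. Taking the maximum of these finitely many stabilization times over all layers gives a time after which $C$ is entirely isolated while $H$ is intact; combined with Step~2 and $G^{(t)}\subseteq G^{(0)}$, the network has then stabilized to $H$ together with the isolated crust, which is exactly the claimed output.

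\textbf{Main obstacle.} Steps~1 and~2 are short; the delicate part is Step~3, where one must carefully combine weak fairness with edge-monotonicity to conclude that each crust node is isolated after \emph{finitely} many rounds (not merely eventually in a loose sense), and recognize that the peeling layering is the right induction parameter — essentially re-deriving Lemma~\ref{lem:a-core} in a scheduler-oblivious, ``parallel'' form. All of the care goes into the bookkeeping of that layered induction.
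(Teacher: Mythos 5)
Your proof is correct and follows essentially the same route as the paper: rule out edge creation via $\min\{d_{G^{(t)}}(u),d_{G^{(t)}}(v)\}\le n-1<\beta$ (which immediately gives the $O(m)$ bound on rounds with changes), then argue that the resulting pure-deletion process realizes the order-independent edge-peeling characterization of the $\alpha$-core from Lemma~\ref{lem:a-core}. The paper's proof merely asserts that the deletion order is irrelevant and that the dynamics ``is'' that algorithm, whereas your Steps~2 and~3 (the core-preservation invariant, and the layered induction combining weak fairness with degree monotonicity to isolate the crust) supply precisely the details that assertion leaves implicit.
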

\begin{proof}
First of all, even if a node connects with any other node, its degree will be $n-1$. Thus, it holds that $min\{d(u),d(v)\}\leq n-1 < \beta$. This ensures that no edge will ever be created by the \ab-Dynamics. Thus, only deletions of edges can be performed. As a result, the maximum number of rounds where a change happens is a straightforward $O(m)$. What we need to show is that the output of the protocol is a network where all isolated nodes belong to the $(\alpha-1)$-crust of $G^{(0)}$ and the rest of the nodes belong to the $\alpha$-core of $G^{(0)}$.

To prove our claim we change slightly the algorithm described in Lemma~\ref{lem:a-core} to process edges instead of nodes. This change is made so that the \abd described in this section will be in fact a realization of this main memory algorithm and thus its output will be the $\alpha$-core of $G^{(0)}$.
Indeed, one can compute $G$'s $\alpha$-core by repeatedly deleting all edges for which one of its endpoints has degree $<\alpha$. The procedure stops when there is no such remaining edge, that is, all edges have endpoints with degree $\geq \alpha$. The order in which the edges are considered is irrelevant. It is easy to see that this algorithm computes the $\alpha$-core of the given network and in fact it is the \abd described in this section.
\end{proof}

A final note concerns the time complexity. Note that the aforementioned theorem does not state anything about the time complexity of the protocol, it just states the maximum number of rounds where a change happens. We can compute the time complexity if we describe the scheduler. If we assume that $\forall t: C^{(t)}=E^{(t)}$, that is the scheduler contains all edges and only those of the $G^{(t)}$ network then the time complexity is $O(n)$. This is because, at each round it is guaranteed that one node will become isolated unless stabilization has been achieved. Similarly, if we assume a uniform scheduler that chooses one pair of nodes uniformly at random in each time step, then the \abd stabilizes in $O(mn^2\log{m})$ steps by a simple application of the coupon collector problem on the selection of edges.

\section{\abd~with $\alpha=\beta$~~and a Proper Potential Function on the Degrees} \label{sec:degree}
\vspace{-0.2cm}
We study the \abd where the potential is any symmetric non-decreasing function on the degrees of its two endpoints. We prove that in this case \abd  stabilizes while the time complexity is $O(n)$, assuming that $\alpha=\beta$ and that for all $t$, $C^{(t)}$ contains all $\binom{n}{2}$ possible pairwise interactions. All proofs can be found in
Appendix~\ref{app:degree}.
More formally, we define the potential of a pair $(u,v)$ to be $\mathcal{E}(u,v)=f(d_{G^{(t)}}(u),d_{G^{(t)}}(v))$, where $f$ is a $\textit{proper}$ (symmetric and non-decreasing in both variables) function. Since $f$ is proper, the potential function is computationally symmetric and thus the protocol is consistent. 

For the network $G^{(t)}$, let $R^{(t)}(u,v)$ be an equivalence relation defined on the set of nodes $V$ for time $t$, such that $(u,v)\in R^{(t)}$ iff $d_{G^{(t)}}(u)=d_{G^{(t)}}(v)$. The equivalence class $R^{(t)}_i$ corresponds to all nodes with degree $d(R^{(t)}_i)$, where $i$ is the rank of the degree in decreasing order. Thus the equivalence class $R^{(t)}_1$ contains all nodes with maximum degree in $G^{(t)}$. Assuming that $n=|V|$, the maximum number of equivalence classes is $n-1$, as the degree can be in the range $[0,n-1]$ and no pair of nodes $(u,v)$ with degrees $d_{G^{(t)}}(u)=0$ and $d_{G^{(t)}}(v)=n-1$ can exist. Let $|G^{(t)}|$ be the number of equivalence classes in $G^{(t)}$. 

We prove by induction that in this setting, \abd always stabilizes in at most $|G^{(0)}|+1$ steps. To begin with, the clique $\mathcal{K}_n$ as well as the null graph $\overline{\mathcal{K}_n}$ both stabilize in at most one step, for any value of $\beta$. The following renormalization lemma describes how the number of equivalence classes is reduced and is crucial to the induction proof. 

\begin{lemma}
If $d(R^{(t)}_1) = n-1$, $\forall t\geq c, c\in \mathbb{N}$, and the subgraph $G^{(c)} \setminus R^{(c)}_1$ stabilizes for any value of $\beta$ and proper function $f$, then $G^{(c)}$ stabilizes as well. Similarly, if $d(R^{(t)}_{|G^{(t)}|}) = 0$, $\forall t\geq c, c\in \mathbb{N}$, and the subgraph $G^{(c)} \setminus R^{(c)}_{|G^{(c)}|}$ stabilizes for any value of $\beta$ and proper function $f$, then $G^{(c)}$ stabilizes as well. The time it takes for $G^{(c)}$ to stabilize is the same as the time it takes for the induced subgraph to stabilize for both cases.
\end{lemma}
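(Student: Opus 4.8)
The plan is to show that the nodes of the top equivalence class $R^{(c)}_1$ (those of degree $n-1$) behave ``transparently'': they are connected to everybody, they stay connected to everybody, and their presence shifts every other node's degree by exactly the same constant, so the dynamics on $G^{(c)}\setminus R^{(c)}_1$ is an exact simulation of the dynamics on $G^{(c)}$. The symmetric statement for $R^{(c)}_{|G^{(c)}|}$ (isolated nodes, degree $0$) is entirely analogous, with deletions playing the role that additions play in the first case, so I would prove the first case carefully and then remark that the second follows by the same argument.

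First I would fix notation: let $k=|R^{(c)}_1|$ and let $H^{(t)}=G^{(t)}\setminus R^{(c)}_1$ be the induced subgraph on the remaining $n-k$ nodes, for $t\ge c$. The hypothesis gives $d_{G^{(t)}}(u)=n-1$ for every $u\in R^{(c)}_1$ and every $t\ge c$; in particular every such $u$ is adjacent to every other node at every time $t\ge c$, so for $t\ge c$ the edge set of $G^{(t)}$ is exactly the edge set of $H^{(t)}$ together with all $\binom{k}{2}+k(n-k)$ edges incident to $R^{(c)}_1$. Consequently, for any node $w\notin R^{(c)}_1$ we have $d_{G^{(t)}}(w)=d_{H^{(t)}}(w)+k$, a fixed additive shift. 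Now consider a pairwise interaction $(u,v)\in C^{(t)}$ at some time $t\ge c$. If at least one endpoint, say $u$, lies in $R^{(c)}_1$, then $d_{G^{(t)}}(u)=n-1\ge d_{G^{(t)}}(v)$, so by monotonicity and symmetry of $f$, $\mathcal{E}(u,v)=f(n-1,d_{G^{(t)}}(v))\ge f(d_{G^{(t)}}(v),d_{G^{(t)}}(v))$; but this is not yet enough — the point I actually need is just that the edge $(u,v)$ is already present (since $d_{G^{(t)}}(u)=n-1$) and, by the hypothesis applied at time $t+1$, $d_{G^{(t+1)}}(u)=n-1$ as well, so the edge $(u,v)$ is present at $t+1$ regardless of the rule; hence such interactions never change $G^{(t)}$. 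If instead both $u,v\notin R^{(c)}_1$, then $\mathcal{E}_{G^{(t)}}(u,v)=f(d_{H^{(t)}}(u)+k,\,d_{H^{(t)}}(v)+k)$, a quantity determined entirely by $H^{(t)}$, and the edge $(u,v)$ is present in $G^{(t)}$ iff it is present in $H^{(t)}$; so the threshold rule produces exactly the same decision as it would in the evolution of $H^{(t)}$ under the modified potential $\tilde{\mathcal{E}}(x,y)=f(x+k,y+k)$, which is still proper (symmetric and non-decreasing). Therefore, restricting any execution of \abd on $G^{(c)}$ to the node set $V\setminus R^{(c)}_1$ yields a valid execution of \abd on $H^{(c)}$ with potential $\tilde{\mathcal{E}}$ and the induced scheduler $\tilde C^{(t)}=\{(u,v)\in C^{(t)} : u,v\notin R^{(c)}_1\}$, which still satisfies weak fairness on $V\setminus R^{(c)}_1$.

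By hypothesis $H^{(c)}=G^{(c)}\setminus R^{(c)}_1$ stabilizes for every value of $\beta$ and every proper $f$ — in particular for $\tilde{\mathcal{E}}$ — say at time $t_0\ge c$. Then for all $t\ge t_0$ we have $H^{(t)}=H^{(t_0)}$, and combining this with the fact that the edges incident to $R^{(c)}_1$ are the same full set at every $t\ge c$, we get $G^{(t)}=G^{(t_0)}$ for all $t\ge t_0$; that is, $G^{(c)}$ stabilizes, and at the very same time the induced subgraph does, giving the claimed equality of stabilization times. For the second part, if $d_{G^{(t)}}(w)=0$ for all $w$ in the bottom class $R^{(c)}_{|G^{(c)}|}$ and all $t\ge c$, then these nodes carry no edges at any time $t\ge c$, removing them changes no other node's degree, any interaction touching them leaves the (absent) edge absent by the hypothesis at time $t+1$, and interactions among the remaining nodes see exactly the potential $f$ restricted to $H^{(t)}=G^{(t)}\setminus R^{(c)}_{|G^{(c)}|}$; so the same simulation argument applies verbatim with $k=0$.

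The main obstacle, and the step to get right, is the claim that interactions with an endpoint in $R^{(c)}_1$ (resp. $R^{(c)}_{|G^{(c)}|}$) are genuinely inert: this relies essentially on the \emph{hypothesis itself} — that $d(R^{(t)}_1)=n-1$ for \emph{all} $t\ge c$, not just at $t=c$ — because a priori the rule applied to such an edge could depend on $\beta$ and on $f$ in a way that would delete it; it is only the standing assumption $d_{G^{(t+1)}}(u)=n-1$ that forces the edge back (or, in the $\alpha=\beta=n-1$ threshold regime used in this section, the computation $\mathcal{E}(u,v)=f(n-1,\cdot)\ge f(n-1,n-1)$ together with the threshold would also do it, but invoking the hypothesis directly is cleaner and matches the lemma's phrasing). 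I would be careful to state this dependence explicitly rather than sweeping it under the rug, and to check that the induced scheduler still meets the weak fairness condition so that ``stabilizes for any $\beta$ and proper $f$'' may legitimately be applied to the sub-instance.
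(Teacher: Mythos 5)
Your proposal is correct and follows essentially the same route as the paper: it isolates the always-fully-connected class $R^{(c)}_1$, observes that it shifts every remaining node's degree by the constant $|R^{(c)}_1|$, and absorbs that shift into the modified proper function $g(x,y)=f(x+|R^{(c)}_1|,y+|R^{(c)}_1|)$ so that the hypothesis ``stabilizes for any $\beta$ and proper $f$'' applies to the induced subgraph, with the degree-$0$ case handled symmetrically with no shift. The only cosmetic difference is that you justify the inertness of edges touching $R^{(c)}_1$ directly from the standing hypothesis, whereas the paper additionally invokes its monotonicity property; both are valid.
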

The following theorem establishes stabilization in linear time. 

\begin{theorem}
When $\alpha=\beta$, $f$ is proper, $\mathcal{E}(u,v)=f(d_{G^{(t)}}(u),d_{G^{(t)}}(v))$, and the scheduler contains all $\binom{n}{2}$ possible pairwise interactions in each time step, \abd with input $G^{(0)}$ stabilizes in at most $|G^{(0)}|+1$ steps.
\end{theorem}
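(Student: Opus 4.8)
The plan is to induct on $k=|G^{(0)}|$, proving the bound simultaneously for every proper $f$ and every $\beta$ (with $\alpha=\beta$) — this uniformity is what lets the renormalization lemma feed the induction hypothesis back at the shifted parameters it produces. For the base case $k=1$ the graph $G^{(0)}$ is $d$-regular, so every pair has potential $f(d,d)$; since $\alpha=\beta$ the degree-preserving branch of the protocol is vacuous, hence $G^{(1)}\in\{\mathcal{K}_n,\overline{\mathcal{K}_n}\}$, and as already noted each of these stabilizes within one further round, so $G^{(0)}$ stabilizes within $2=k+1$ rounds.

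Before the inductive step I would record two structural facts. \emph{Monotonicity:} if $d_{G^{(t)}}(u)\ge d_{G^{(t)}}(v)$ then $f(d_{G^{(t)}}(u),\cdot)\ge f(d_{G^{(t)}}(v),\cdot)$ pointwise, so $u$'s round-$(t{+}1)$ neighborhood contains $v$'s (the single possibly differing endpoint, $u$ against $v$, is symmetric); hence $d_{G^{(t+1)}}(u)\ge d_{G^{(t+1)}}(v)$, the rank order of the vertices by degree is preserved forever, equivalence classes can only merge, and $|G^{(t+1)}|\le|G^{(t)}|$. \emph{Threshold structure:} for every $t\ge1$, $E^{(t)}=\{uv:f(d_{G^{(t-1)}}(u),d_{G^{(t-1)}}(v))\ge\beta\}$, and a symmetric non-decreasing rule applied to the endpoints' degrees yields nested neighborhoods, i.e.\ $G^{(t)}$ is a threshold graph; in particular it has a vertex of degree $n-1$ or a vertex of degree $0$, deleting its whole top (resp.\ bottom) class again leaves a threshold graph with one fewer class (all other degrees shifted uniformly), and a regular threshold graph is $\mathcal{K}_n$ or $\overline{\mathcal{K}_n}$ — so $|G^{(t)}|=1$ with $t\ge1$ already means $G^{(t)}$ is stable.

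For the step, let $k=|G^{(0)}|\ge2$ and advance one round. If $|G^{(1)}|<k$, the induction hypothesis stabilizes $G^{(1)}$ in $\le|G^{(1)}|+1\le k$ rounds, hence $G^{(0)}$ in $\le k+1$. Otherwise $|G^{(1)}|=k$, and by the threshold structure $G^{(1)}$ has either a dominating top class $R^{(1)}_1$ ($d(R^{(1)}_1)=n-1$) or an isolated bottom class $R^{(1)}_{k}$ ($d(R^{(1)}_{k})=0$); I treat the dominating case, the other being symmetric via the second half of the renormalization lemma. If one can show $d(R^{(t)}_1)=n-1$ for \emph{all} $t\ge1$, then the renormalization lemma applies with $c=1$: $G^{(1)}$ stabilizes in the same number of rounds as $G^{(1)}\setminus R^{(1)}_1$, a threshold graph with $k-1$ classes, which by the induction hypothesis — applied to the proper function $f'(x,y)=f(x+|R^{(1)}_1|,\,y+|R^{(1)}_1|)$ and the same $\beta$ — stabilizes in $\le k$ rounds; hence $G^{(1)}$ in $\le k$ and $G^{(0)}$ in $\le k+1$ rounds.

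The main obstacle is exactly the persistence of that dominating (or isolated) class — and the round bookkeeping when it fails. A dominating vertex $u$ loses an edge at a round $T$ only if some $v$ has $f(n-1,d_{G^{(T-1)}}(v))<\beta$, and monotonicity then makes $u$ shed \emph{every} vertex of degree $\le d_{G^{(T-1)}}(v)$ simultaneously; using that $G^{(T-1)}$ is already a threshold graph, that the rank order is fixed, and that the scheduler is complete, one must show such a collapse cannot occur without merging two degree classes at round $T-1$ or $T$, so that the budget of $k+1$ rounds survives. Nailing down this dichotomy (and checking the induction hypothesis in the uniform-in-$(f,\beta)$ form the renormalization lemma demands) is the real work; the rest is the scaffolding above.
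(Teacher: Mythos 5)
Your scaffolding matches the paper's: induction on the number of degree classes, stated uniformly in $(f,\beta)$ so that the renormalization lemma can feed back shifted parameters, with monotonicity of degrees and nesting of neighborhoods as the structural backbone. But the step you yourself flag as ``the real work'' is a genuine gap, and the specific route you sketch for closing it does not go through. You want to show that the dominating (resp.\ isolated) class of $G^{(1)}$ keeps degree $n-1$ (resp.\ $0$) at \emph{all} later times, or that any collapse of it forces two degree classes to merge. Neither holds as stated: with $f(x,y)=x+y$ and $\beta$ chosen so that $f(n-1,0)<\beta$, a dominating vertex sheds its edges to all sufficiently low-degree vertices at the very next round, and nothing in your outline forces that event to merge two degree classes. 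So ``dominating vs.\ isolated, whichever the threshold graph happens to have'' is the wrong dichotomy on which to hang persistence.

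The paper's dichotomy is on the function, not on the graph: either $f(n-1,0)\ge\beta$, in which case a degree-$(n-1)$ class, \emph{whenever it exists}, persists forever (for a vertex $u$ of degree $n-1$, every pair $(u,v)$ has potential $f(n-1,d(v))\ge f(n-1,0)\ge\beta$); or $f(n-1,0)<\beta$, in which case a degree-$0$ class persists forever. Exactly one of these persistence statements is always available, and the renormalization lemma is invoked only with that one. The remaining case --- the guaranteed extremal class is absent at both $t_0$ and $t_0+1$ while $|G^{(t_0+1)}|=|G^{(t_0)}|$ --- is closed not by persistence but by counting: by the nesting lemma each class's degree is one of $|G^{(t_0)}|+1$ prefix sums of class sizes, the excluded extremal value ($0$ or $n-1$) removes one option, the $|G^{(t_0)}|$ classes have pairwise distinct degrees, and the class sizes agree between $t_0$ and $t_0+1$; hence the degree sequences coincide, $G^{(t_0+1)}=G^{(t_0)}$, and stabilization has already occurred. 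Your observation that $G^{(t)}$ is a threshold graph for $t\ge 1$ is correct and is a clean repackaging of the nesting lemma, but without the $f(n-1,0)$-versus-$\beta$ split and the counting step the induction does not close.
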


\section{\abd~Stabilization for Arbitrary Scheduler} \label{sec:Local_Rules}
\vspace{-0.2cm}
In this section, we prove stabilization (with no speed bound) for any $\alpha\leq \beta$ in an adversarial setting where the scheduler $\mathcal{S}$ may be completely arbitrary subject to the fairness condition.
In addition, we further generalize by changing the definition of potential, from $\mathcal{E}(u,v)=f(d_{G^{(t)}}(u),d_{G^{(t)}}(v))$ to $\mathcal{E}(u,v)=f(g_{G^{(t)}}(u),g_{G^{(t)}}(v))$, for a family of functions $g_{G}:\mathbb{R}^k\rightarrow \mathbb{R}, k\in\mathbb{N}$.
We call a function $g_G(u)$ \textit{degree-like} if it only depends on the neighborhood $N_G(u)$ of node $u$ and has the following property: assuming that the neighborhood of node $u$ at time $t$ is $N_{G^{(t)}}(u)$, and the neighborhood of $v$ at time $t'$ is $N_{G^{(t')}}(v)$, and $N_{G^{(t)}}(u) \supseteq N_{G^{(t')}}(v)$, then we require that $g_{G^{(t)}}(u) \geq g_{G^{(t')}}(v)$. The reason we extend the notion of degree is to represent more interesting rules as shown in the toy model of social dynamics of Section~\ref{sec:extensions}.

The potential function is computationally symmetric since $f$ is proper and $g$ is common for $u$ and $v$. The protocol in Section~\ref{sec:degree} is a special case of this protocol, where $g$ is the degree of the node, the scheduler contains all $\binom{n}{2}$ possible pairwise interactions at each time step and $\alpha=\beta$.
We first need the following definition:

\begin{definition}
A pair $(t,D)$ is $|D|-Done$ if $t\in \mathbb{N}$, $D \subseteq V$ and  $\forall u \in D$ it holds that their neighborhood does not change after time $t$. That is, $N_{G^{(t')}}(u) = N_{G^{(t)}}(u)$, for $t'\geq t$.
\end{definition}
Our stabilization proof repeatedly detects $|D|-Done$ pairs with increasing $|D|$. When $D=V$, all neighborhoods do not change, and thus the process stabilizes.

\begin{lemma}\label{lem:Increasing_D_Done}
If there exists a $|D|-Done$ pair $(t,D)$ at round $t$ with $|D|<|V|$, then $\exists t'>t$ such that at round $t'$ there exists a $(|D|+1)-Done$ pair $(t',D')$.
\end{lemma}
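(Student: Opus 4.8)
The plan is to start from the $|D|$-Done pair $(t,D)$ and argue that among the nodes outside $D$, there is one whose $g$-value is "extremal" in a way that forces its neighborhood to freeze after some finite time. First I would note that, by the definition of degree-like, $g_{G^{(s)}}(w)$ for $w \notin D$ is monotone with respect to inclusion of neighborhoods; since every $w \notin D$ has at most $n-1$ possible neighbors, the sequence of values $g_{G^{(s)}}(w)$ (as $s$ ranges over $s \ge t$) takes finitely many values. The key quantity to track is something like the maximum over $w \notin D$ of $\limsup_{s\to\infty} g_{G^{(s)}}(w)$, or more carefully the node that eventually achieves the largest $g$-value infinitely often; call a candidate such node $w^\*$. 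The intuition is the same as in the renormalization argument of Section~\ref{sec:degree}: a node of (eventually) maximal degree-like value either always has potential $\ge \beta$ with its partners outside its neighborhood — so it only gains edges — or its potential with current neighbors is $< \alpha$ — so it only loses edges; monotonicity of $f$ and $g$ rules out oscillation once we are past the time when everything in $D$ and all the transient higher values have settled.

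Next I would make this precise. Let $T_0 \ge t$ be a time after which (i) no node $w \notin D$ ever again attains a $g$-value strictly larger than its eventual "stable ceiling" — this is possible because there are finitely many nodes and finitely many achievable values, so each $\limsup$ is attained from some point on — and (ii) for the chosen node $w^\*$ with the largest such ceiling, say value $\gamma$, we have $g_{G^{(s)}}(w^\*) \le \gamma$ for all $s \ge T_0$ and $= \gamma$ infinitely often. Now consider any interaction $(w^\*, x)$ activated at a round $s \ge T_0$. Its potential is $f(g_{G^{(s)}}(w^\*), g_{G^{(s)}}(x))$. I would split on whether $\gamma$ is "large enough": since $f$ is non-decreasing in both arguments, if at one round where $g(w^\*)=\gamma$ an edge to some neighbor $x$ is deleted (potential $<\alpha$), then at every later round with $g(w^\*)=\gamma$ and the same or smaller $g(x)$ the edge is again deleted or stays absent — and because $x$'s value is bounded above by $\gamma$'s regime and... here is where I must be careful. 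The cleanest route is: among all nodes not in $D$, pick $w^\*$ to be one whose neighborhood sequence $N_{G^{(s)}}(w^\*)$ is eventually \emph{monotone} (eventually only shrinking or eventually only growing); I claim such a node exists. Growth is automatically eventually monotone-stable since $N(w^\*) \subseteq V \setminus \{w^\*\}$ is bounded, so "eventually only growing" means it stabilizes. For "eventually only shrinking", shrinking is likewise bounded below by $\emptyset$, so it stabilizes too. So the real content is: there is a node outside $D$ whose neighborhood is eventually monotone.

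To get that, I would argue by contradiction: suppose every $w \notin D$ has a neighborhood that both gains and loses edges infinitely often after every time. Consider the node $w^\*$ attaining the global eventual-max $g$-ceiling $\gamma$ (ties broken arbitrarily), and consider a round $s$, large, where $g_{G^{(s)}}(w^\*) = \gamma$ and an edge $(w^\*,x)$ is about to be added back after having been deleted, or deleted after having been added — whichever happens, monotonicity of $f$ together with the fact that $g(x) \le \gamma$ always (past $T_0$) and $g(w^\*)=\gamma$ pins $f(g(w^\*),g(x))$ into a fixed cell of the threshold partition $(-\infty,\alpha),[\alpha,\beta),[\beta,\infty)$ whenever $g(x)$ repeats, and $g(x)$ takes finitely many values — so the decision on edge $(w^\*,x)$ is eventually a fixed function of $g(x)$'s value, contradicting that this edge flips infinitely often while $g(x)$ cycles through its finite value set. (One needs the edges incident to $w^\*$ where $g(w^\*)<\gamma$ at the moment of decision to be handled too; but those rounds, combined with $w^\*$ only regaining value $\gamma$ later, force via monotonicity that such an edge, once removed when $g(w^\*)$ dipped, behaves consistently — this bookkeeping is the fiddly part.) Hence some $w^\* \notin D$ has an eventually stable neighborhood; let $t'$ be the round past which it is fixed, and set $D' = D \cup \{w^\*\}$, noting that freezing $N(w^\*)$ does not disturb $D$ since $D$'s neighborhoods were already frozen. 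Then $(t', D')$ is $(|D|+1)$-Done.

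The main obstacle I anticipate is exactly the last bookkeeping: cleanly isolating a node whose neighborhood is eventually monotone, because the degree-like value of a node can go up and down as \emph{other} nodes (possibly outside $D$) change, so "maximal $g$-value" is a moving target and the extremal-node argument from the $\alpha=\beta$ linear-time proof does not transfer verbatim. I expect the fix is to quantify over the finite lattice of achievable neighborhoods / $g$-values and invoke a pigeonhole/stationarity argument (some value is the eventual supremum and is attained cofinally), then leverage the non-decreasing property of $f$ to show the threshold-cell of each incident edge's potential eventually stops changing. Everything else — that $f,g$ symmetric and monotone give consistency, that a frozen $D$ stays frozen when we add $w^\*$, that $D=V$ means stabilization — is routine given the definitions already in the excerpt.
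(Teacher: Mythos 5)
Your skeleton (finitely many achievable $g$-values, pick an extremal node, use monotonicity of $f$ and $g$, reduce to "some node outside $D$ has an eventually monotone neighborhood") matches the paper's, but the step you yourself flag as "the fiddly part" is precisely the heart of the proof, and your proposed contradiction does not close it. The observation that past $T_0$ the decision on an edge $(w^\*,x)$ is "a fixed function of $g(x)$'s value" is not a contradiction with that edge flipping infinitely often: the threshold cell of $f(g(w^\*),g(x))$ depends on \emph{both} endpoints, and $g(x)$ may genuinely oscillate through its finite value set (as $x$'s own neighborhood changes), landing the potential alternately below $\alpha$ and at or above $\beta$. For the same reason your first sketch ("the eventual-max node either only gains or only loses edges") fails: a maximal-$g$ node can gain an edge to one partner and lose an edge to another in the same round. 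So the existence of a node with eventually monotone neighborhood is asserted but not proved.

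The paper closes this gap with a sharper two-case device that you are missing. Choose $t_1$ and $u\notin D$ so that $g_{G^{(t_1)}}(u)$ dominates $g_{G^{(t_1')}}(v)$ for \emph{all} future times $t_1'\ge t_1$ and all $v\notin D$ (a genuine pointwise maximum over the finite value set, not a $\limsup$), breaking ties by largest degree. Either $u$ never drops an edge after $t_1$ — and then it cannot gain one either, since gaining would raise its degree without lowering its $g$-value, contradicting the tie-break, so $N(u)$ freezes — or $u$ drops some neighbor $w$ at the first time $t_2$. At that moment $f(g_{G^{(t_1)}}(u),g_{G^{(t_2-1)}}(w))<\alpha$, and this \emph{single} inequality dominates every future potential involving $w$: as long as $w$ forms no new edge its neighborhood only shrinks, so $g(w)$ cannot increase, and any prospective partner $v$ has $g(v)\le g_{G^{(t_1)}}(u)$; hence by monotonicity of $f$ the potential of any candidate new edge at $w$ stays below $\alpha$, and an induction on the first re-formation time shows $w$ never gains an edge. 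Thus $N(w)$ is non-increasing and stabilizes, and $D$ is extended by $w$ rather than by the extremal node $u$. It is this transfer from the maximizer $u$ to the dropped neighbor $w$ — not a pigeonhole over threshold cells — that makes the argument go through; without it (or an equivalent mechanism) your proof is incomplete.
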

\begin{proof}

The core idea is to find a time-step $t_1$ where a node $u\not \in D$ maximizes $g$, as specified in the next paragraph; if $u$ never drops any edge in subsequent time steps, we prove that its neighborhood is stabilized, and we extend $D$ by $u$; if it drops an edge with a node $w$, this node $w$ is not able to preserve any other edge, due to the selection of $u$, and we are able to extend $D$ by $w$.

More formally, we denote by $t_1\geq t$ the time-step where there is some node $u \not\in D$ such that $g_{G^{(t_1)}}(u)\geq g_{G^{(t_1')}}(v)$, for all $t_1'\geq t_1$ and $v \not \in D$. If there are many choices for $t_1$ and $u$, we pick any $t_1$ and $u$ such that $u$ has the highest degree possible. Note that, later in time (say at $t_1'>t_1$), it is entirely possible that $u$'s neighborhood shrinks and thus its $g$ value drops $(g_{G^{(t_1')}}(u) < g_{G^{(t_1)}}(u))$. 
It is guaranteed that $t_1$ exists, as there are finitely many graphs with $|V|$ nodes, and finitely many nodes. Thus, there are finitely many values of $g_{G}(u)$ to appear after time $t$. Additionally, the fairness condition guarantees that the pairwise interaction between $u$ and $v$ will be eventually activated, for any $v$. 

If $u$ never drops any edge after $t_1$, then its neighborhood can only grow or stay the same. But if its neighborhood grows, due to the properties of function $g$, its value will not drop and the degree of $u$ will increase. However, the way we picked $u$ and $t_1$ does not allow this. We conclude that the neighborhood of $u$ does not change after time $t_1$, and thus we can extend $D$ by $\{u\}$, that is $(t_1,D \cup \{u\})$ is $(|D|+1)-Done$.
Else, if $u$ drops an edge after $t_1$, let $t_2>t_1$ be the first time step that a neighbor $w$ of $u$ in $G^{(t_2-1)}$ is not a neighbor of $u$ in $G^{(t_2)}$. Since $u$'s neighborhood stays the same until $t_2-1$, it follows that $g_{G^{(t_1)}}(u)=g_{G^{(t_2-1)}}(u)$.
The neighborhood of $w$ does not grow at subsequent time steps, that is $N_{G^{(t_2')}}(w) \supseteq N_{G^{(t_2'+1)}}(w)$, $t_2'\geq t_2-1$. To prove this, we show that $w$ never forms a new edge after $t_2-1$. Suppose it does at $t_2'+1$ for the first time. Then $w$ forms an edge with some node $v \not \in D$, due to the definition of $D$. However, we know that $\beta\geq \alpha > f(g_{G^{(t_2-1)}}(u),g_{G^{(t_2-1)}}(w)) = f(g_{G^{(t_1)}}(u),g_{G^{(t_2-1)}}(w)) \geq f(g_{G^{(t_2')}}(v),g_{G^{(t_2')}}(w))$, due to $f$ being non-decreasing, $g$ being degree-like, and the definition of $u$ and $t_1$. Thus, an edge between $v$ and $w$ cannot be formed.

We conclude that the neighborhood of $w$ can only shrink after time $t_2$. But there are only finitely many options for the neighborhood of $w$, and thus there is a time $t_3\geq t_2$ where the neighborhood of $w$ is the same in all subsequent graphs. Therefore, we can extend $D$ by $\{w\}$, that is $(t_3,D \cup \{w\})$ is $(|D|+1)-Done$.
\end{proof}

\begin{theorem} \label{thm:conv_only}
For $\mathcal{E}(u,v)=f(g_{G^{(t)}}(u),g_{G^{(t)}}(v))$, \abd stabilizes for any $\alpha\leq \beta$, proper function $f$, degree-like function $g$ and arbitrary scheduler $\mathcal{S}$ subject to the fairness condition.
\end{theorem}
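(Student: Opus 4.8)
The plan is to derive \Cref{thm:conv_only} as an immediate finite induction on top of \Cref{lem:Increasing_D_Done}, since that lemma already does all the real work. The base case is the observation that the pair $(0,\emptyset)$ is vacuously $0\text{-}Done$: there is no node in $\emptyset$ whose neighborhood could change. This gives us a $|D|\text{-}Done$ pair with $|D|=0<|V|$ to start the process (assuming $|V|\ge 1$; the empty graph is trivial).

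Next I would iterate the lemma. As long as we have a $|D|\text{-}Done$ pair $(t,D)$ with $|D|<|V|$, \Cref{lem:Increasing_D_Done} produces a later round $t'>t$ carrying a $(|D|+1)\text{-}Done$ pair $(t',D')$. Applying this at most $|V|$ times, we obtain a round $t^\star$ and a set $D^\star$ with $|D^\star|=|V|$, i.e. $D^\star=V$, such that $(t^\star,D^\star)$ is $|V|\text{-}Done$. By the definition of $|D|\text{-}Done$, this means $N_{G^{(t')}}(u)=N_{G^{(t^\star)}}(u)$ for every $u\in V$ and every $t'\ge t^\star$. Since the adjacency of a simple graph is completely determined by the neighborhoods of all its nodes, this is exactly the statement $G^{(t')}=G^{(t^\star)}$ for all $t'\ge t^\star$, which is the definition of stabilization. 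Note that each application of the lemma yields only the \emph{existence} of a (finite) later round, with no quantitative bound, so the induction delivers eventual stabilization but not a speed bound, consistent with the theorem statement; the finiteness is guaranteed simply because the induction has at most $|V|$ steps, each ending at a finite time.

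I do not expect a genuine obstacle here: the hard part has been isolated into \Cref{lem:Increasing_D_Done}, whose proof already handles the two cases (the distinguished high-$g$ node $u$ either freezes its neighborhood, or sheds an edge to some $w$ that can then never gain an edge and hence eventually freezes). The only things to be slightly careful about in writing the theorem's proof are: (i) checking that the hypotheses of the lemma — proper $f$, degree-like $g$, fairness of $\mathcal{S}$, and $\alpha\le\beta$ — are precisely those assumed in the theorem, so the lemma is applicable at every step of the induction; and (ii) making explicit that "all neighborhoods constant from $t^\star$ on" is equivalent to "$G^{(t)}$ constant from $t^\star$ on" for undirected simple graphs. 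Both are routine.
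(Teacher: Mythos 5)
Your proposal is correct and matches the paper's own proof essentially verbatim: both start from the trivially $0\text{-}Done$ pair $(0,\emptyset)$, iterate Lemma~\ref{lem:Increasing_D_Done} at most $|V|$ times to reach a $|V|\text{-}Done$ pair, and conclude stabilization because fixed neighborhoods for all nodes determine the graph. No differences or gaps to report.
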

\begin{proof}

It trivially holds that $(0,\emptyset)$ is $0-Done$. By applying Lemma~\ref{lem:Increasing_D_Done} once, we increase the size of $D$ by $1$. Thus, by applying it $|V|$ times, we end up with a $|V|-Done$ pair $(t,V)$. Since all neighborhoods stay the same for all future steps, $G^{(t')}=G^{(t)}$ for all $t'\geq t$.
\end{proof}
Theorem~\ref{thm:conv_only} can directly prove stabilization of the protocol in Section~\ref{sec:min}.

\section{Turing-Completeness}\label{sec:turing}
\vspace{-0.2cm}
In this section we describe the \abd that is able to simulate Rule $110$, a one-dimensional Cellular Automaton (CA) that Cook proved to be Turing-Complete \cite{DBLP:journals/compsys/000104} (for a discussion on CA and Rule 110, see
Appendix~\ref{app:ssec_ca-rule110}).
Thus, we prove that \abd is Turing-Complete as well, meaning that it is computationally universal since it can simulate any Turing machine (or in other terms any algorithm).
All proofs of theorems and lemmas in this section can be found in Appendix~\ref{app:ssec_tc-proofs}.

\begin{definition}
Rule $110$ is a one-dimensional CA. Let $cell^{(t)}(i)$ be the binary value of the $i$-th cell at time $t$. If $cell^{(t)}(i)=0$, then $cell^{(t+1)}(i)=cell^{(t)}(i+1)$. Else, $cell^{(t+1)}(i)$ is $0$ if $cell^{(t)}(i-1)=cell^{(t)}(i+1)=1$, and $1$ otherwise.
\end{definition}

Let $CN^{(t)}(u,v) = |N_{G^{(t)}}(u) \cap N_{G^{(t)}}(v)|$ be the number of common neighbors of $u$ and $v$ at time $t$, and $CE^{(t)}(u,v)=\left| E(CN^{(t)}(u,v)) \right|$ be the number of edges between the common neighbors of $u$ and $v$ at time $t$. For the following simulation we assume w.l.o.g. that $\alpha=\beta$ and that the scheduler $\mathcal{S}$ contains all possible $\binom{n}{2}$ interactions, for all time steps. The potential between nodes $u$ and $v$ is defined as follows:

\[   
\mathcal{E}^{(t)}(u,v) = 
     \begin{cases}
       \beta+60+CE^{(t)}(u,v)-CN^{(t)}(u,v)	& if~ 66\le CN^{(t)}(u,v)+|E^{(t)}(u,v)|\le 70	\\
       \beta+12-CE^{(t)}(u,v)	& if~ CN^{(t)}(u,v)+|E^{(t)}(u,v)|=71	\\
       \beta - |E^{(t)}(u,v)|   & if~ 40\le CN^{(t)}(u,v)\le 41  							\\
       
       \beta-1+|E^{(t)}(u,v)|	& otherwise
     \end{cases}
\]

The first $2$ branches are the ones that are actually related to Rule $110$, and are used only in Lemma~\ref{lem:Simulate_110}. The rest of them are only used in Lemma~\ref{lem:tc-main-structure} and ensure technical details, namely that some pairs of nodes always flip the status of their connection (Branch $3$), effectively providing us with a clock, and some of them always preserve it (Branch $4$). 

As required, computing the function only uses a constant number of words in the working memory, which have logarithmic size in bits compared to the input memory (which contains the neighborhoods of $u$ and $v$), and requires polynomial time in the size of the input memory. For example, to compute $CN^{(t)}(u,v)$, one could iterate over all pairs $(u',v')$ such that $u\in N_{G^{(t)}}(u), v\in N_{G^{(t)}}(v)$, and increment a counter initially set to zero, every time $u'=v'$. Similarly, to compute $CE^{(t)}(u,v)$, one can iterate over quadruples $u',u'',v',v''$ and increment a counter whenever $u'=v', u''=v''$ and there exists an edge between $u'$ and $u''$. Additionally, the potential function only depends on nodes at a constant distance (at most $1$) from either $u$ or $v$, and it is network-agnostic (not assuming access on the topology of $G^{(0)}$). Finally it is computationally symmetric and thus the protocol is consistent.

Informally, our simulation of Rule $110$ consists of the following steps. First, we design a primitive cell-gadget (henceforth $PCG$) that stores binary values, but fails to capture Rule $110$ since it doesn't distinguish between the left and the right cell. Then, by making use of the $PCG$ as a building block, we build the main cell-gadget (henceforth $CG$) that is used to simulate a single cell of the CA. Then, each time step from Rule $110$ is simulated using $2$ rounds of the \ab-Dynamics; on the first round, some $PCG$s acquire their proper value while on the second round, the rest of the $PCG$s copy the correct value from the ones that already acquired it. Finally, the two steps are merged into one in order to achieve stabilization of the dynamics when Rule $110$ has also stabilized.

For clarity purposes, we slightly abuse notation, and we count the rounds of the \abd by multiples of $0.5$ instead of $1$. Thus, we write that the sequence of configurations is $G^{(0)}, G^{(0.5)}, G^{(1)}...$, where configurations $G^{(t+0.5)}$, for $t \in \mathbb{N}$, are transitional states of the network and have no correspondence with cell states of the CA.

In order to construct the $PCG$ and the $CG$, we first construct two auxiliary gadgets, the always-on $(x,y)$-gadget and the flip $(x,y)$-gadget. The always-on $(x,y)$-gadget is simply a clique of $22$ nodes. $20$ of them have no edges to other nodes in the network, while $2$ of them (namely $x$ and $y$) may be connected with other nodes. The flip $(x,y)$-gadget is basically two always-on $(x,y)$-gadgets, with nodes $x$ and $y$ being the same for both gadgets, with the exception that the edge between $x$ and $y$ may not exist. See Figure~\ref{fig:gadgets} for both of these gadgets. We later show that, under certain conditions, the edge between $x$ and $y$ always exists in an always-on gadget, and flips its state at each time step, in a flip gadget.

A $PCG$ consists of a pair of nodes $(h,l)$, such that the existence of an edge between them corresponds to value $1$ and otherwise it corresponds to value $0$, and $60$ auxiliary nodes $a_{1}, \ldots a_{60}$. Furthermore, for each of the $120$ pairs of the form $(h,a_{i})$ and $(l,a_{i})$, there exists a corresponding $(h,a_{i})$ and $(l,a_{i})-$flip gadget. When we have two different $PCG$s, say $A$ and $B$, we write $A(h), A(l), A(a_1), \ldots , A(a_{60})$ for the nodes of $A$ and similarly $B(h), B(l), B(a_1),$ $\ldots , B(a_{60})$ for the nodes of $B$. We write $A^{(t)}$ to denote the value of $A$ at time $t$; in other words $A^{(t)} = |E^{(t)}(A(h),A(l))|$.

In order to connect two different $PCG$s (say $A$ and $B$) we add $4$ always-on gadgets: the always-on $(A(h),B(h))$ gadget, the always-on $(A(h),B(l))$ gadget, the always-on $(A(l),B(h))$ gadget and the always-on $(A(l),B(l))$ gadget, as shown in Figure~\ref{fig:gadgets}. Intuitively, this relates $CE^{(t)}(A(h),A(l))$ to the sum of values of the connected $PCG$s.

\begin{figure}[t]
\centering
\includegraphics[scale=0.7]{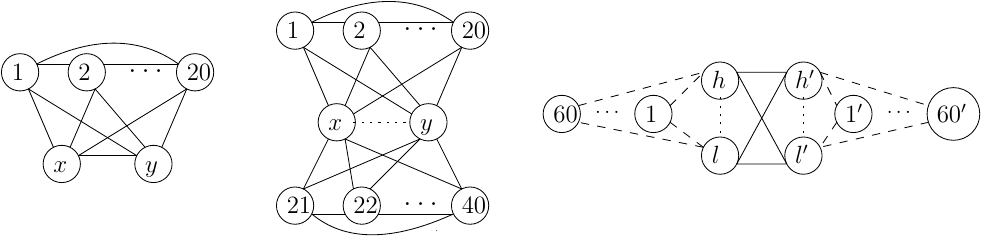}
\caption{To the left, we have an always-on $(x,y)$ gadget. In the middle, we have a flip $(x,y)$ gadget; the dotted line between $(x,y)$ denotes that this particular edge may or may not exist. To the right, we have two $PCG$s. The dashed lines denote flip gadgets, the dotted lines denote that these particular edges may or may not exist. The continuous lines denote always-on gadgets; these $4$ always-on gadgets is how we connect $PCG$s.}
\label{fig:gadgets}
\end{figure}

The $i$-th $CG$ that corresponds to the $i$-th cell (we write $CG(i)$) consists of $4$ $PCG$s, which we identify as $A_1(i)$, $A_2(i)$, $B_1(i)$ and $B_2(i)$. At time $t=0$, the edge in each flip gadget of $A_1(i), A_2(i)$ exists, while the edge in each flip gadget of $B_1(i), B_2(i)$ does not exist. We connect each $A_j(i)$ with each $B_k(i)$ ($4$ connections in total, where each connection uses $4$ always-on gadgets, as depicted in Figure~\ref{fig:gadgets}). In order to connect $CG(i)$ (cell $i$) with $CG(i+1)$ (cell $i+1$) we connect $A_j(i)$ with $A_j(i+1)$, and $A_j(i)$ with $B_j(i+1)$.
A $CG$ is said to have value $0$ if all $4$ of its $PCG$s are set to $0$ and $1$ if all $PCG$s are set to $1$. We guarantee that no other case can occur in $G^{(t)}, t\in \mathbb{N}$, although this is not guaranteed for the intermediate configurations $G^{(t+0.5)}, t\in \mathbb{N}$.


To conclude the construction of $G^{(0)}$, each cell of Rule $110$ corresponds to a $CG$ in $G^{(0)}$, and neighboring cells have their corresponding $CG$s connected. Finally, we set the value of its $CG$ (that is the value of its $4$ $PCG$s) equal to the initial value of the corresponding cell. 

Notice that all our gadgets are defined for a single time-step, namely for $t=0$. One could imagine that in subsequent time-steps, nodes contained in the same gadget in $G^{(0)}$ are no longer connected in the same way (effectively destroying the gadget), or even that new gadgets are formed. The following lemma shows that this is not the case. Informally, it shows that no new gadgets are created, and that the only difference between graphs at different time steps concern edges that do not destroy the existing gadgets. For example, in the definition of a flip gadget, there is only one pair of nodes (its two special nodes) for which it does not matter whether they share an edge or not; the lemma shows that between nodes that belonged in the same flip gadget in $G^{(0)}$, only this special pair may change its connection (existence or not of an edge between them) through time.

\begin{lemma} \label{lem:tc-main-structure}

If there exists a flip $(x,y)$-gadget connected to an $A_j(i)$ $PCG$ in $G^{(0)}$, then the edge $(x,y)$ at time $t$ exists if and only if $t \in \mathbb{N}\cup \{0\}$. Similarly, if there exists a flip $(x,y)$-gadget connected to a $B_j(i)$ $PCG$ in $G^{(0)}$, then the edge $(x,y)$ exists if and only if $t \not \in \mathbb{N}\cup \{0\}$. Finally, all other edges exist at any time step if and only if they exist in $G^{(0)}$, with the exception of edges between $(h,l)$ nodes of a $PCG$.
\end{lemma}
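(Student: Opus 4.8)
The plan is to prove the lemma by a simultaneous induction on the (half-integer) time step $t$, establishing at each step a structural invariant that pins down the entire graph $G^{(t)}$ up to the "free" edges (the $(h,l)$ edges of $PCG$s and the special $(x,y)$ edges of flip gadgets). Concretely, the invariant at time $t$ states: (a) every always-on gadget from $G^{(0)}$ is still a clique on the same $22$ vertices; (b) every flip gadget from $G^{(0)}$ still consists of its two $22$-cliques minus possibly the special edge, with no other edges among those vertices; (c) the special edge of a flip gadget attached to an $A_j(i)$ $PCG$ is present iff $t\in\mathbb{N}\cup\{0\}$, and the one attached to a $B_j(i)$ $PCG$ is present iff $t\notin\mathbb{N}\cup\{0\}$; (d) no edges exist other than those forced by the gadget structure and the $(h,l)$ $PCG$ edges. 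The base case $t=0$ is exactly the construction of $G^{(0)}$.

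For the inductive step I would argue pair-by-pair which branch of the potential function $\mathcal{E}^{(t)}(u,v)$ applies to each pair $(u,v)$, using the invariant at time $t$ to compute $CN^{(t)}(u,v)$, $CE^{(t)}(u,v)$ and $|E^{(t)}(u,v)|$. The key points: for a pair $(u,v)$ that lies inside a common always-on gadget but is not its special pair, both are in a $22$-clique, so they already share an edge and have a large number ($\ge 20$) of common neighbors all mutually adjacent; one checks this lands in branch $4$ (or whichever branch forces the edge to persist), so the edge survives — this is what keeps cliques intact. For the special pair $(x,y)$ of an always-on gadget, the two copies of the clique give $40\le CN^{(t)}(x,y)\le 41$, so branch $3$ applies and the edge flips; since both copies of a flip gadget are present, a flip gadget's special edge flips each half-step, giving the parity statement (c) — with the phase offset between $A_j$ and $B_j$ gadgets coming from their opposite initial conditions. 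For pairs of vertices that do \emph{not} lie in a common gadget of $G^{(0)}$ (e.g. an auxiliary vertex of one gadget and an auxiliary vertex of another), one checks they have too few common neighbors to trigger branches $1$–$3$, so branch $4$ applies and — since by the invariant they had no edge — none is created. Finally, pairs that are never edges and never become edges (most cross-gadget pairs) stay empty, which is why no new gadgets are ever formed.

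The main obstacle will be the sheer bookkeeping of case analysis: one must verify, for \emph{every} type of vertex pair in the construction, exactly which branch of the four-way potential fires, and this requires carefully tabulating common-neighbor counts. In particular the delicate cases are (i) pairs $(x,y)$ where $x$ belongs to one flip gadget and $y$ to another that share a $PCG$ special node — here the always-on connector gadgets between $PCG$s contribute common neighbors and one must confirm the total stays out of the ranges $66$–$71$ and out of $40$–$41$; and (ii) the interaction at the \emph{transitional} half-steps $t\in\mathbb{N}+0.5$, where the $B_j$-gadget special edges are present and the $A_j$-gadget ones are absent, so the common-neighbor counts shift and one must recheck that branch $4$ still governs all the "structural" pairs. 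I would handle this by grouping pairs into a small number of classes (both endpoints auxiliary in the same always-on gadget; both auxiliary in the same flip gadget; the special pair of an always-on gadget; the special pair of a flip gadget; endpoints in distinct gadgets; endpoints involving $(h,l)$ nodes) and dispatching each class with a single computation, deferring the $(h,l)$-edge behaviour to the subsequent lemmas (Lemma~\ref{lem:Simulate_110}) as the statement explicitly allows.
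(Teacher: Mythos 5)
Your proposal follows essentially the same route as the paper's proof: induction on the half-integer time steps with the lemma's statement (equivalently your invariant (a)--(d)) as the induction hypothesis, a class-by-class dispatch of vertex pairs according to which branch of the potential fires, and the observation that a non-edge can only be created when its endpoints have at least $40$ common neighbors, which by the invariant happens only for flip-gadget special pairs and $(h,l)$ pairs. One step in your sketch is wrong as stated, and it is exactly the kind of case the bookkeeping must not miss: you write that the special pair $(x,y)$ of an always-on gadget has $40\le CN^{(t)}(x,y)\le 41$ common neighbors and therefore flips via branch $3$. That is only true when $(x,y)$ is also the special pair of a flip gadget (two superimposed always-on cliques). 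For a standalone always-on gadget --- in particular the four connectors between adjacent $PCG$s --- the special pair has only about $20$ to $24$ common neighbors, so branch $4$ applies and the edge must \emph{persist}; if these connector edges flipped, the inter-$PCG$ structure (and hence the whole simulation) would collapse. The paper handles this by explicitly splitting the special pair of an always-on gadget into two subcases according to whether it coincides with a flip gadget's special pair. With that subcase added, your argument matches the paper's.
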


Our next step is to discuss how $(h,l)$ edges of $PCG$s change. The number of common neighbors of an $h,l$ pair of an $A_j(i)$ is $CN^{(t)}(h,l)=70$, for all integer time steps $t$ and valid $i,j$, as it has $5$ neighboring $PCG$s (each contributing $2$), and $60$ auxiliary nodes within the PCG (by Lemma~\ref{lem:tc-main-structure}). For non-integer time steps $t+0.5, t\in \mathbb{N}\cup \{0\}$, by Lemma~\ref{lem:tc-main-structure}, the $60$ auxiliary nodes are not connected with $h$ and $l$, and so $CN^{(t)}(h,l)=10$. Similarly, the number of common neighbors of an $(h,l)$ pair of a $B_j(i)$ is $CN^{(t)}(h,l)=66$, for all non-integer $t$ and valid $i,j$, and $CN^{(t)}(h,l)=6$ for integer $t$.

Furthermore, for all $t$, it holds that $CE^{(t)}(A_j(i)(h), A_j(i)(l))=8+A_j^{(t)}(i-1)+B_1^{(t)}(i)+B_2^{(t)}(i)+A_j^{(t)}(i+1)+B_j^{(t)}(i+1)$, as the edges between common neighbors are the internal edges of connected $PCG$s, plus the connection between $A_j^{(t)}(i-1)$ and $B_j^{(t)}(i)$ ($4$ edges), plus the connection between $A_j^{(t)}(i+1)$ and $B_j^{(t)}(i+1)$ ($4$ edges). Similarly, for a $B_j(i)$ we have that $CE^{(t)}(B_j(i))=4+A_j^{(t)}(i-1)+A_1^{(t)}(i)+A_2^{(t)}(i)$.

\begin{lemma} \label{lem:Simulate_110}
It holds that $A_j^{(t)}(i)=B_j^{(t)}(i)=cell^{(t)}(i)$ for $j\in\{1,2\}$ and all $i,t\in \mathbb{N}$.
\end{lemma}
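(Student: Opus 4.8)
The plan is to prove Lemma~\ref{lem:Simulate_110} by induction on $t$, where the induction hypothesis asserts that at every integer time $t$ the configuration $G^{(t)}$ faithfully encodes the cell values of Rule~$110$, i.e. that $A_j^{(t)}(i)=B_j^{(t)}(i)=cell^{(t)}(i)$ for $j\in\{1,2\}$ and all $i$. The base case $t=0$ is immediate from the construction of $G^{(0)}$, since we explicitly set the value of each $CG(i)$ — hence of all four of its $PCG$s — to the initial value $cell^{(0)}(i)$, and the flip gadgets of the $A_j$'s start ``on'' while those of the $B_j$'s start ``off''. For the inductive step I would analyze the two half-rounds $G^{(t)}\to G^{(t+0.5)}\to G^{(t+1)}$ separately, using Lemma~\ref{lem:tc-main-structure} to pin down which edges (hence which common-neighbor and common-edge counts) are present at integer versus half-integer times.

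First I would handle the half-round $G^{(t)}\to G^{(t+0.5)}$, during which the $A_j(i)$ pairs should compute and store the new cell value $cell^{(t+1)}(i)$. At integer time $t$, an $(h,l)$ pair of an $A_j(i)$ has $CN^{(t)}(h,l)=70$ and $|E^{(t)}(h,l)| = A_j^{(t)}(i) = cell^{(t)}(i)$, so $CN^{(t)}(h,l)+|E^{(t)}(h,l)|$ equals $70$ if $cell^{(t)}(i)=0$ and $71$ if $cell^{(t)}(i)=1$; these select branches $1$ and $2$ of the potential respectively. Plugging in $CE^{(t)}(A_j(i)) = 8 + A_j^{(t)}(i-1) + B_1^{(t)}(i) + B_2^{(t)}(i) + A_j^{(t)}(i+1) + B_j^{(t)}(i+1)$, and using the induction hypothesis to rewrite every $A$- and $B$-value at time $t$ as the corresponding $cell^{(t)}$ value, I would check by a short case analysis (on the triple $cell^{(t)}(i-1), cell^{(t)}(i), cell^{(t)}(i+1)$, eight cases) that $\mathcal{E}^{(t)}(h,l)\geq\beta$ exactly when Rule~$110$ dictates $cell^{(t+1)}(i)=1$ and $\mathcal{E}^{(t)}(h,l)<\beta$ otherwise; since $\alpha=\beta$, this sets $A_j^{(t+0.5)}(i)=cell^{(t+1)}(i)$. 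In the same half-round I must also verify (again via Lemma~\ref{lem:tc-main-structure}, which tells us $A_j$ flip gadgets are ``on'' at integer and ``off'' at half-integer times, and vice versa for $B_j$) that the $B_j(i)$ pairs, whose $(h,l)$ count is $CN^{(t)}(h,l)=6$ at integer time $t$, fall into branch $4$ (``otherwise''), giving $\mathcal{E}=\beta-1+|E^{(t)}(h,l)|<\beta$ when the edge is present and likewise no new edge, so that $B_j^{(t+0.5)}(i)=0=B_j^{(t)}(i)$ — i.e. the $B$'s are quiescent during the first half-round.

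Then I would handle the symmetric half-round $G^{(t+0.5)}\to G^{(t+1)}$, in which the roles swap: the $B_j(i)$ pairs now have $CN^{(t+0.5)}(h,l)=66$ so that $CN+|E|$ is $66$ or $67$, landing in branch~$1$, and with $CE^{(t+0.5)}(B_j(i)) = 4 + A_j^{(t+0.5)}(i-1) + A_1^{(t+0.5)}(i) + A_2^{(t+0.5)}(i)$, where each $A$-value at time $t+0.5$ already equals $cell^{(t+1)}(\cdot)$ by the previous paragraph, I would verify the branch-$1$ formula forces $B_j^{(t+1)}(i)=A_j^{(t+0.5)}(i)=cell^{(t+1)}(i)$, i.e. the $B$'s copy the freshly computed value; meanwhile the $A_j(i)$ pairs now have $CN^{(t+0.5)}(h,l)=10$, land in branch~$4$, and hence preserve their value, so $A_j^{(t+1)}(i)=A_j^{(t+0.5)}(i)=cell^{(t+1)}(i)$. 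Combining, at integer time $t+1$ all four $PCG$s of $CG(i)$ carry $cell^{(t+1)}(i)$, closing the induction. I also need a brief remark that branches~$2$ and~$3$ of the potential (the $CN+|E|=71$ case for $A$'s and the $40\le CN\le 41$ case) are never triggered for the $(h,l)$ pairs in question, and that the always-on/flip gadget edges behave exactly as Lemma~\ref{lem:tc-main-structure} asserts throughout — both of which follow from the common-neighbor counts already tabulated. The main obstacle is purely the bookkeeping: correctly matching up which half-round each family of $PCG$s is ``active'' in, keeping straight the constant offsets ($8$ versus $4$ in $CE$, $70/66$ versus $10/6$ in $CN$, the $+60/+12/-1$ shifts in $\mathcal{E}$) so that the eight-case Rule~$110$ truth table comes out exactly right, and confirming that the ``dormant'' family genuinely sits in branch~$4$ and neither gains nor loses its $(h,l)$ edge during the half-round it is not supposed to act.
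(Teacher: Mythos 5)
Your proposal follows essentially the same route as the paper's proof: induction on integer $t$, with the $A_j(i)$ pairs computing $cell^{(t+1)}(i)$ via branches 1 and 2 of the potential during the first half-round (using $CN=70$ and $CE=8+\cdots$) while the $B_j(i)$ pairs sit in the preserve branch, and the roles swapping in the second half-round so that the $B$'s copy the fresh value via branch 1 with $CN=66$ and $CE=4+\cdots$. One small slip in your bookkeeping: in branch 4 the potential $\beta-1+|E^{(t)}(h,l)|$ equals $\beta$ (not $<\beta$) when the edge is present, which is precisely why that branch preserves the edge, and the correct conclusion is $B_j^{(t+0.5)}(i)=B_j^{(t)}(i)$ rather than $B_j^{(t+0.5)}(i)=0$; this does not affect the rest of your argument, since branch 1 at time $t+0.5$ depends only on $CE$ and not on the current $(h,l)$ edge status.
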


The following corollary is a straightforward consequence of this lemma.

\begin{corollary} \label{cor:eq}
It holds that $cell^{(t)}(i)=CG^{(t)}(i)$.
\end{corollary}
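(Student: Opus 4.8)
The plan is to read the statement off directly from Lemma~\ref{lem:Simulate_110} together with the definition of the value of a $CG$. Recall that $CG(i)$ is built from the four $PCG$s $A_1(i), A_2(i), B_1(i), B_2(i)$, and that by definition $CG^{(t)}(i)$ equals $0$ precisely when all four of these $PCG$s have value $0$ and equals $1$ precisely when all four have value $1$. A priori this definition leaves open the possibility that the four $PCG$s disagree, in which case $CG^{(t)}(i)$ would not even be well-defined; part of the work is to rule this out.

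First I would invoke Lemma~\ref{lem:Simulate_110}, which gives $A_1^{(t)}(i) = A_2^{(t)}(i) = B_1^{(t)}(i) = B_2^{(t)}(i) = cell^{(t)}(i)$ for every cell index $i$ and every $t \in \mathbb{N}$. In particular, at every integer time step the four $PCG$s comprising $CG(i)$ share a common value, so $CG^{(t)}(i)$ is well-defined for all $t\in\mathbb{N}$ (this is exactly the ``no other case can occur in $G^{(t)}, t\in\mathbb{N}$'' claim made when the $CG$ was introduced, now justified). Then I would split into the two cases: if $cell^{(t)}(i)=0$ then all four $PCG$s are $0$, so by definition $CG^{(t)}(i)=0$; if $cell^{(t)}(i)=1$ then all four are $1$, so $CG^{(t)}(i)=1$. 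In both cases $CG^{(t)}(i)=cell^{(t)}(i)$, which is the claim.

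I do not expect any real obstacle here: the substance of the simulation — that the $(h,l)$ edges of the $PCG$s evolve according to Rule~$110$ and that the underlying gadget structure is preserved across rounds — is already packaged in Lemma~\ref{lem:Simulate_110} and, upstream, in Lemma~\ref{lem:tc-main-structure}. The only thing the corollary contributes is the bookkeeping remark that the four $PCG$s of a cell-gadget never fall out of sync, so that the aggregate value $CG^{(t)}(i)$ faithfully tracks $cell^{(t)}(i)$. Consequently the proof is essentially a one-line deduction, and the genuine content of the Turing-completeness argument resides entirely in the preceding lemmas.
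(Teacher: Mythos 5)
Your proof is correct and matches the paper's intent exactly: the paper states the corollary as a ``straightforward consequence'' of Lemma~\ref{lem:Simulate_110}, and your argument — all four $PCG$s of $CG(i)$ equal $cell^{(t)}(i)$ at integer times, so the aggregate value $CG^{(t)}(i)$ is well-defined and equals $cell^{(t)}(i)$ — is precisely that deduction. Your additional remark that this justifies the earlier ``no other case can occur in $G^{(t)}$, $t\in\mathbb{N}$'' claim is a correct and welcome clarification.
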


The above construction simulates Rule $110$. The only problem is that it takes two time steps to simulate a single time step of Rule $110$, meaning that even if Rule $110$ converges, our construction infinitely flips between two different configurations, due to the flip gadgets, and as a result it does not stabilize. To overcome this problem, we use the aforementioned construction and make changes that allow us to remove the intermediate steps in the simulation, that is the steps $t+0.5, t\in \mathbb{N}\cup \{0\}$.
\begin{theorem} \label{thm:tc}
The \abd is Turing-Complete.
\end{theorem}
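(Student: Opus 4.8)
The plan is to leverage the machinery built up in Lemmas~\ref{lem:tc-main-structure}, \ref{lem:Simulate_110} and Corollary~\ref{cor:eq}, which already establish that the two-step construction faithfully simulates Rule $110$, and then to patch the one remaining defect: the simulation takes two \abd rounds per Rule $110$ step, so the flip gadgets keep toggling and the dynamics never stabilize even when Rule $110$ does. Since Cook~\cite{DBLP:journals/compsys/000104} proved Rule $110$ is Turing-Complete, it suffices to produce an \abd instance that stabilizes exactly when the simulated Rule $110$ computation halts (reaches a fixed configuration). So the concrete goal of the theorem's proof is to describe the ``merging'' modification sketched in the paragraph before the theorem and verify it preserves correctness while removing the half-steps.

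First I would make precise what ``merging the two steps into one'' means at the level of the potential function. In the current construction, on integer rounds some $PCG$s (the $A_j$'s) read off the correct next value from their connected neighbors, and on the following half-round the remaining $PCG$s (the $B_j$'s) copy the freshly-updated $A_j$ values; the alternation is driven by the fact that the $A$-flip-gadgets carry an edge precisely on integer times and the $B$-flip-gadgets precisely on non-integer times (Lemma~\ref{lem:tc-main-structure}). The fix is to have every $PCG$ update simultaneously in a single round: each $CG(i)$ computes $cell^{(t+1)}(i)$ directly from $cell^{(t)}(i-1), cell^{(t)}(i), cell^{(t)}(i+1)$ within one round, so that $G^{(t)}$ for integer $t$ already equals the configuration after $t$ Rule $110$ steps with no transitional $G^{(t+0.5)}$ in between. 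I would re-purpose the $A$- and $B$-sub-gadgets (or collapse to a single layer of $PCG$s per cell) and adjust the branch thresholds in $\mathcal{E}^{(t)}$ — the $CN + |E|$ ranges $[66,70]$, $71$, and the $CN\in[40,41]$ clock branch — so that the Rule $110$ transition is realized in one shot, redoing the common-neighbor and common-edge counts ($CN^{(t)}(h,l)$, $CE^{(t)}(h,l)$) for the merged topology exactly as in the displayed computations preceding Lemma~\ref{lem:Simulate_110}. Then I would restate the analogue of Lemma~\ref{lem:Simulate_110}: by induction on $t$, $CG^{(t)}(i) = cell^{(t)}(i)$ for all integer $t$, using that the potential reads off the neighboring cell values correctly and the always-on gadgets keep every auxiliary edge fixed (the merged analogue of Lemma~\ref{lem:tc-main-structure}, which still holds because the always-on gadgets are cliques large enough that their internal edges sit above threshold $\beta$ and never drop).

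With the merged construction in hand, the stabilization argument is short: if Rule $110$ reaches a configuration $\mathbf{c}$ with $cell^{(t+1)}(i)=cell^{(t)}(i)$ for all $i$ and all $t\ge T$, then for $t\ge T$ every $CG$ recomputes its own current value, so every $(h,l)$ edge is preserved; and by the structural lemma all other edges are already time-invariant, hence $G^{(t)}=G^{(T)}$ for all $t\ge T$ and the \abd stabilizes. Conversely, if Rule $110$ never stabilizes, neither does the \abd, so the \abd stabilizes if and only if the simulated computation halts — which, via Cook's result, makes \abd able to simulate an arbitrary Turing machine and decide its halting behavior in the same sense, i.e.\ \abd is Turing-Complete. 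I would also note briefly that the potential function still meets the model's constraints: it is network-agnostic, indistinguishable across nodes, depends only on structure within distance $3$, and is computable in polynomial time with logarithmic working memory — the same checks already carried out for the two-step version.

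The main obstacle I anticipate is purely combinatorial bookkeeping rather than conceptual: getting a single-round potential function whose four (or however many) branches simultaneously (a) encode the full Rule $110$ update $cell^{(t)}(i)=0 \Rightarrow cell^{(t+1)}(i)=cell^{(t)}(i+1)$ and the $cell^{(t)}(i)=1$ case, (b) keep all always-on clique edges above $\beta$ and all ``dead'' non-edges below $\alpha=\beta$, and (c) avoid spurious collisions in the $CN+|E|$ value used to select the branch — recall the original design needed a five-cell-neighborhood worth of contributions and carefully chosen constants ($60$, $12$, $40$–$41$, etc.) to separate the cases. Since the merged $CG$ must distinguish its left neighbor from its right neighbor in one round (the very thing the $PCG$ alone could not do, per the construction's own remark), I expect the delicate part is arranging the connection pattern between adjacent $CG$s — possibly using asymmetric numbers of always-on gadgets on the left versus right side so that $CE^{(t)}(h,l)$ encodes a weighted sum $w_{-1}\,cell^{(t)}(i-1) + w_0\,cell^{(t)}(i) + w_{+1}\,cell^{(t)}(i+1)$ from which Rule $110$'s truth table can be read by threshold comparisons. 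I would carry this out explicitly in the appendix (as the paper indicates the full proof lives there), but for the proof proposal the key point is that this is a finite gadget-design exercise with a known template, and its correctness reduces, exactly as above, to an induction mirroring Lemma~\ref{lem:Simulate_110} plus the structural invariant of Lemma~\ref{lem:tc-main-structure}.
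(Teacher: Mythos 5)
Your reduction to Rule~$110$ via Cook's theorem, your diagnosis of the remaining defect (the half-steps keep the flip gadgets toggling so the dynamics never stabilize even when Rule~$110$ does), and your final stabilization argument all match the paper. But the core of your proof — \emph{how} the two rounds are merged into one — diverges from the paper and leaves a genuine gap. You propose to redesign the gadgets and the branch constants of $\mathcal{E}$ so that a single layer of $PCG$s computes the Rule~$110$ update in one round, encoding $w_{-1}\,cell(i-1)+w_0\,cell(i)+w_{+1}\,cell(i+1)$ via asymmetric connection multiplicities. That is precisely the part you defer as ``a finite gadget-design exercise,'' yet it is the entire technical content of the theorem: the two-layer $A/B$ architecture exists \emph{because} the one-layer version could not separate left from right, and the $B$-layer is what supplies both the weight-$2$ right-neighbor signal and the current-cell signal to $CE(A_j(i)(h),A_j(i)(l))$. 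Collapsing it means every quantity the potential reads ($CN$, $CE$, $|E(u,v)|$) is being updated in the same synchronous round it is being read, so the copies that realize the weights must themselves be full Rule-$110$ computers that stay in lockstep, and all the constants ($60$, $12$, the $[66,70]$ and $[40,41]$ windows) must be re-derived without collisions against the always-on cliques. Nothing in your write-up shows this closes; asserting a ``known template'' is not a proof.

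The paper's actual proof sidesteps the redesign entirely. It keeps $G^{(0)}$ and all gadgets unchanged and instead replaces the potential function by a composed one: given the pairwise interaction $(x,y)$ at integer time $t$, the new potential first reconstructs, from the distance-$3$ neighborhood of $\{x,y\}$ in $G^{(t)}$, what the induced subgraph within distance $1$ of $\{x,y\}$ would be at the suppressed half-step $t+0.5$ (this is possible because a new edge can only form between nodes at distance at most $2$, so everything relevant to $t+0.5$ is determined by nodes within distance $3$ at time $t$), and then applies the old potential to that auxiliary graph. This is why the model's locality constant is $3$ in Section~\ref{sec:turing} and $1$ elsewhere; the correctness of Lemmas~\ref{lem:tc-main-structure} and~\ref{lem:Simulate_110} is inherited verbatim rather than re-proved for a new topology. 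If you want to salvage your route, you must either actually exhibit the merged gadget with its recomputed counts and a merged analogue of Lemma~\ref{lem:tc-main-structure}, or switch to the paper's lookahead composition, which buys the one-round simulation at the cost of a larger (but still constant) locality radius rather than a new construction.
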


\section{Extensions} \label{sec:extensions}
\vspace{-0.2cm}
We briefly discuss two straightforward extensions of \abd and provide related examples. To begin with, we can add static information to nodes/edges (e.g., weights). This information is encoded by the potential function and does not change with time. The degree-like function defined in Section~\ref{sec:Local_Rules} can be used to assign a time-independent importance factor (e.g. a known centrality measure in $G^{(0)}$) while letting $g(u)$ be the sum of these factors of nodes in $N_{G^{(t)}}(u)$.
To demonstrate it, we provide a small example with a toy model inspired by Structural Balance Theory \cite{10.2307/2572978} of networks with friendship and enmity relations \cite{ANTAL2006130}. This example is more reminiscent of population dynamics rather than distributed protocols. Assume that the network of agents corresponds to people (nodes) with friendship relations (edges). Each agent $v$ is defined by how nice she is $n(v)$, how extrovert she is $x(v)$ as well as by the set of her enemies $\mathcal{EN}(v)$. We wish to design a model that captures how friendships change in this setting when enemies do not change\footnote{The permanence of enmity is in fact not exactly compatible with structural balance theory on networks.} as well as when friendships are lost in case of very few common friends, while friends are made in the opposite case.

To define the social dynamics we need to define the scheduler and the potential function that essentially describe our toy model. The scheduler captures the interactions between the agents enforced by the model. This toy model is only for the purpose of highlighting our convergence results and we do not claim to realistically capture certain social phenomena. The scheduler is defined as follows:  (a) if two agents $u$ and $v$ are enemies then they never become friends (no pairwise interaction between them in $C^{(t)}$, for any $t$), (b) if two agents $u$ and $v$ are not connected by an edge in $G^{(t)}$ (they are not friends) but their distance is at most the sum of their extrovertedness, then they interact - that is, if at time $t$ it holds that $1<dist(u,v)\leq x(u)+x(v)$ then there is an edge $(u,v)$ in $C^{(t)}$, (c) if two agents are connected by an edge in $G^{(t)}$, then there is a pairwise interaction between them in $C^{(t)}$ if their number of common friends is $\leq \gamma$. If their common friends are $>\gamma$ then their friendship is strong and it will not be affected at this round, and thus no edge in $C^{(t)}$ is introduced. This concludes the description of the scheduler. 

As for the potential function, we define the potential between $u$ and $v$ in $G^{(t)}$ to be $\mathcal{E}(u,v)=(n(u)+\sum_{w\in N(u)}{n(w)})+(n(v)+\sum_{w\in N(v)}{n(w)})$, capturing our intuition that friendships are created or stopped based on how nice the two agents and their neighbors are. 
This is a computationally symmetric function and thus the protocol is consistent. The function $g$ corresponds to the sum of the niceness of a node plus the niceness of its neighbors and thus it is degree-like. The function $f$ is proper since it is a simple sum between $u$ and $v$ w.r.t. the output of the function $g$ in each node.  
Thus, \abd on this social network stabilizes by Theorem~\ref{thm:conv_only} (the proof holds without any modification, even in this somewhat extended version of \ab-Dynamics). Theorem~\ref{thm:conv_only} also allows us to add any rules w.r.t. the scheduler $\mathcal{S}$ like imposing a maximum number of friends, allowing for additional random connections (to achieve long-range interaction), etc. Similarly, we can change the definition of potential and still prove stabilization as long as the assumptions of Theorem~\ref{thm:conv_only} are valid. If these assumptions are violated, as it would be in the case of a potential function that applies to a subset of neighbors (e.g., common neighbors between $u$ and $v$), then a new analysis is required to prove stabilization, if stabilization can be reached.
Finally, the scheduler allows us to remove the assumption of permanence on enmity by allowing under certain conditions particular pairwise interactions, thus dynamically changing the set $\mathcal{EN}(v)$. 

Another straightforward generalization is to allow for general stateless protocols $\mathcal{A}$ targeting at providing algorithmic solutions for specific problems. An example of such a generalization is given below for constructing a spanning star. We show in simple terms the stateless approach when compared to state-dependent approaches for constructing a network (e.g., Network Constructors model \cite{Michail2016,Michail2017NetworkCA}). 
In some sense, we already provide such an example of explicit network construction in the case of the $\alpha$-core. We assume a uniform random scheduler, that is, in our model we assume that in each time step a pairwise interaction is chosen uniformly at random. In \cite{Michail2016} they provide a simple protocol that uses states on the nodes, which, starting from the null graph, constructs the spanning star in optimal $\Theta(n^2\log{n})$ expected time. We discuss a protocol $\mathcal{A}$ that computes a spanning star starting from any network. It is reminiscent of the random copying method \cite{10.1145/335168.335170} for generating power law networks. It would be interesting to find out whether hub-and-spoke networks (essentially star networks) can be generated by some similar social process. In this case, the probability of choosing pairwise interactions should be related to the degree of the involved nodes, leading to the definition of a non-uniform random scheduler.

To describe the protocol let $u$ and $v$ be two nodes that interact at time $t$ as determined by the scheduler. If no edge exists between them, an edge $(u,v)$ is added. Assume w.l.o.g. that $d_G^{(t)}(u)> d_G^{(t)}(v)$. Then, the protocol dictates that all edges of $v$ are to be moved to $u$. In case $d_G^{(t)}(u)=d_G^{(t)}(v)\neq 1$, we break symmetry (symmetry breaking was also needed in \cite{Michail2016} by the scheduler) by tossing a fair coin in each node as to which node is going to transfer its neighbors. The nodes communicate the result of their toss and if found equal no change happens in the current round, otherwise we again move all edges from the one node to the other. If $d_G^{(t)}(u)=d_G^{(t)}(v)=1$ then let $x$ and $y$ be the only neighbors of $u$ and $v$ respectively. If $d_G^{(t)}(x)=d_G^{(t)}(y)=1$, $x$ and $y$ toss a fair coin and if it happens to be different one of these nodes will be the root of a tree with three leaves. Otherwise, the same process is applied on $x$ and $y$ as in $u$ and $v$. Note that in this case the degrees of $x$ and $y$ cannot be both equal to $1$. 

On the positive side, the difference of this protocol to the one given in \cite{Michail2016} is that no state dynamics are used and we start from an arbitrary network. However, on the negative side, a pairwise interaction between $u$ and $v$ may affect all nodes up to distance $2$ since no states are used that could allow us to move these edges incrementally in future interactions. Correctness is proved based on the observation that in each round when a leaf node has its degree increased then the connected components of the network are reduced, otherwise either a node becomes a leaf or nothing happens due to the symmetry breaking mechanism. Because of this stalling due to symmetry breaking, the time complexity analysis is more involved but we conjecture only by a polylogarithmic factor away from the one in \cite{Michail2016} (due to moving the edges). The protocol could be simplified in order to change only the neighborhood of $u$ and $v$, but the time complexity would increase substantially. To exploit parallel time, we could allow for more interactions per round as long as those are not affecting each other.

\section{Conclusion}\label{sec:conclusion}
\vspace{-0.2cm}
\abd are stateless structural dynamics of a network. The protocol allows for two thresholds that affect the existence of the edges in the pairwise interactions determined by the scheduler at each time step. Since the dynamics are purely structural, the output of the protocol is another network, and thus \abd can be considered as a network transformation process. Such a process for example has been used in \cite{DBLP:conf/kdd/ZhangWWZ09} to detect communities. In fact, the authors wondered whether conditional convergence could be proved. It is a matter of technical details to show that for regular networks one can choose $\alpha$ and $\beta$ such that the protocol never stabilizes. 

For future research, it would be very interesting to look at the notion of parallel time in \ab-Dynamics. Another interesting research direction is to see the effect of higher order structural interactions as well as look at how the model is affected when messages are restricted in size (in accordance to the C\textsc{ongest} model from distributed computing).
Finally, inspired by the computation of the $\alpha$-core in Section~\ref{sec:min}, a very interesting question is to look at more involved problems w.r.t. emergent behavior from simple protocols.

%
%
%
 \bibliographystyle{splncs04}
 \bibliography{sirocco-network-dynamics}
%
%
%
%
%

\newpage
\clearpage
\pagenumbering{roman}
\renewcommand*{\thepage}{\roman{page}}
\appendix

\section{\texorpdfstring{\abd}{(a,b)-Dynamics}~with \texorpdfstring{$\alpha=\beta$}{a=b}~ and a Proper Potential Function on the Degrees} \label{app:degree}

In this case we study \abd where the potential of a pair of nodes is any symmetric non-decreasing function on the degrees of its two endpoints, as happens with Section~\ref{sec:min}. We prove stabilization as well as that the number of steps needed until stabilization is $O(n)$, assuming $\alpha=\beta$.
More formally, we define the potential of a pair $(u,v)$ to be $\mathcal{E}(u,v)=f(d_{G^{(t)}}(u),d_{G^{(t)}}(v))$, where $f$ is a $\textit{proper}$ (symmetric and non-decreasing in both variables) function. The scheduler $\mathcal{S}$ is fixed and contains all $\binom{n}{2}$ possible pairwise interactions. 

For the graph $G^{(t)}$, let $R^{(t)}(u,v)$ be an equivalence relation defined on the set of nodes $V$ for time $t$, such that $(u,v)\in R^{(t)}$ if and only if $d_{G^{(t)}}(u)=d_{G^{(t)}}(v)$. The equivalence class $R^{(t)}_i$ corresponds to all nodes with degree $d(R^{(t)}_i)$, where $i$ is the rank of the degree in decreasing order. This means that the equivalence class $R^{(t)}_1$ contains all nodes with maximum degree in $G^{(t)}$. Assuming that $n=|V|$, the maximum number of equivalence classes is $n-1$, since the degree can be in the range $[0,n-1]$ and no pair of nodes can exist that have degree $0$ and $n-1$ simultaneously. Let $|G^{(t)}|$ be the number of equivalence classes in graph $G^{(t)}$. 
Before moving to the proof, we give certain properties of the dynamic process that hold for all $t\geq 1$, that is they hold after at least one round of the process (initialization). These properties will be used in the proof for stabilization. 

From a bird eye's view of what follows, we notice that in this framework two nodes behave in the same way if their degrees are the same, due to the definition of the potential function. Furthermore, if at any time a node $u$ has degree at least as large as the degree of another node $v$, then it will form at least as many edges in the next time step, thus preserving the relative order of their degrees. These observations help us define some equivalence classes related to the degrees of the nodes, whose properties allow us to inductively prove our upper bounds. This intuition is formalized in the following properties:

\begin{prop}\label{prop:Monotonicity}
If $d_{G^{(t)}}(u) \geq d_{G^{(t)}}(w)$, then $d_{G^{(t+1)}}(u) \geq d_{G^{(t+1)}}(w)$, for all $t\geq 1$.
\end{prop}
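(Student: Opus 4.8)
\textbf{Proof plan for Property~\ref{prop:Monotonicity}.}

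The plan is to fix an arbitrary round $t \geq 1$ and two nodes $u, w$ with $d_{G^{(t)}}(u) \geq d_{G^{(t)}}(w)$, and to exhibit an injection from $N_{G^{(t+1)}}(w)$ into $N_{G^{(t+1)}}(u)$ — actually, since we only need a count, it suffices to produce an injection and observe it need not respect adjacency, only cardinality. The key point I would lean on is that, because the scheduler contains all $\binom{n}{2}$ pairwise interactions in every round, the neighborhood of a node in $G^{(t+1)}$ is completely determined by comparing $\mathcal{E}(u,\cdot) = f(d_{G^{(t)}}(u), d_{G^{(t)}}(\cdot))$ against the common threshold $\alpha = \beta$: for every node $z \neq u$, we have $z \in N_{G^{(t+1)}}(u)$ if and only if $f(d_{G^{(t)}}(u), d_{G^{(t)}}(z)) \geq \beta$. (The middle ``maintain'' branch never applies since $\alpha = \beta$ leaves no gap.) So membership in the next neighborhood is governed purely by a threshold on $f$ applied to current degrees.

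The main step is then a monotonicity argument at the level of degrees. Suppose $z \in N_{G^{(t+1)}}(w)$ with $z \neq u$; then $f(d_{G^{(t)}}(w), d_{G^{(t)}}(z)) \geq \beta$, and since $d_{G^{(t)}}(u) \geq d_{G^{(t)}}(w)$ and $f$ is non-decreasing in its first argument, $f(d_{G^{(t)}}(u), d_{G^{(t)}}(z)) \geq \beta$, hence $z \in N_{G^{(t+1)}}(u)$. This shows $N_{G^{(t+1)}}(w) \setminus \{u\} \subseteq N_{G^{(t+1)}}(u) \setminus \{w\}$. The only subtlety is bookkeeping around the two nodes $u$ and $w$ themselves: $w$ may belong to $N_{G^{(t+1)}}(u)$, and symmetrically $u$ may belong to $N_{G^{(t+1)}}(w)$, and of course $u \notin N_{G^{(t+1)}}(u)$. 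I would handle this by noting that the edge $(u,w)$ contributes equally (either $0$ or $1$) to both $d_{G^{(t+1)}}(u)$ and $d_{G^{(t+1)}}(w)$ — its status is symmetric since the potential is computationally symmetric — so it cancels in the comparison; and every other node $z \notin \{u,w\}$ that is a neighbor of $w$ in $G^{(t+1)}$ is also a neighbor of $u$ in $G^{(t+1)}$ by the displayed inequality. Summing, $d_{G^{(t+1)}}(u) \geq d_{G^{(t+1)}}(w)$.

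I do not expect a serious obstacle here; the argument is essentially a one-line application of monotonicity of $f$ once the ``all interactions fire every round'' assumption is used to pin down $G^{(t+1)}$'s edge set. The only place requiring care is the $\{u,w\}$ boundary case just described, and writing it cleanly amounts to partitioning the neighbor sets as $N_{G^{(t+1)}}(x) = (N_{G^{(t+1)}}(x) \cap \{u,w\}) \cup (N_{G^{(t+1)}}(x) \setminus \{u,w\})$ for $x \in \{u,w\}$ and comparing the two parts separately: the first parts are equal because each reduces to the indicator of the edge $(u,w)$ at time $t+1$, and the second part for $w$ injects into the second part for $u$ by the threshold inequality. The hypothesis $t \geq 1$ is not actually needed for this particular property — it is inherited from the surrounding setup where the first round establishes other structural invariants — but stating it does no harm.
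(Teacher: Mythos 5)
Your proposal is correct and follows essentially the same route as the paper: since $\alpha=\beta$ and all pairs interact each round, membership in $N_{G^{(t+1)}}(u)$ reduces to the threshold test $f(d_{G^{(t)}}(u),d_{G^{(t)}}(\cdot))\geq\beta$, and monotonicity of $f$ in its first argument transfers every next-round neighbor of $w$ to $u$. You are in fact slightly more careful than the paper's two-line proof, which states the inclusion $N_{G^{(t+1)}}(w)\subseteq N_{G^{(t+1)}}(u)$ without addressing the case $u\in N_{G^{(t+1)}}(w)$; your cancellation of the symmetric $(u,w)$ edge handles that cleanly, and your observation that $t\geq 1$ is not needed here is also accurate.
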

\begin{proof}

For any neighbor $v$ of $w$ in $G^{(t+1)}$ it holds that $\mathcal{E}^{(t)}(v,w) \geq \beta$. Then it also holds that $\mathcal{E}^{(t)}(v,u) \geq \beta$, since $f$ is non-decreasing, which means $v$ is also a neighbor of $u$ in $G^{(t+1)}$. \end{proof}

\noindent Nodes that have the same degree at time $t$, share the same neighbors at time $t+1$. 

\begin{prop}\label{prop:EqualDegree}
If $d_{G^{(t)}}(u) = d_{G^{(t)}}(w)$, then $N_{G^{(t+1)}}(u) = N_{G^{(t+1)}}(w)$.
\end{prop}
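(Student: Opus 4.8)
The plan is to notice first that the two standing hypotheses of this section -- $\alpha=\beta$, and that $\mathcal{S}$ contains all $\binom{n}{2}$ pairs at every round -- collapse the protocol to a pure threshold rule. When $\alpha=\beta$ the \emph{maintain} branch (Rule~2) has empty range, and since every pair is re‑evaluated every round, for any two vertices $x,y$ the edge $(x,y)$ belongs to $G^{(t+1)}$ exactly when $\mathcal{E}^{(t)}(x,y)\ge\beta$, i.e.\ when $f\big(d_{G^{(t)}}(x),d_{G^{(t)}}(y)\big)\ge\beta$. Equivalently $N_{G^{(t+1)}}(x)=\{\,y\in V\setminus\{x\}:\ f(d_{G^{(t)}}(x),d_{G^{(t)}}(y))\ge\beta\,\}$, so a vertex's neighbourhood one round later is a function of nothing but its own degree at time $t$ together with the degree sequence of $G^{(t)}$. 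This is the same observation that powers Property~\ref{prop:Monotonicity}, now pushed from an inequality to an equality.

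Given this, the statement is almost immediate. Let $d:=d_{G^{(t)}}(u)=d_{G^{(t)}}(w)$. For any vertex $v\in V\setminus\{u,w\}$ the comparison $f(d,d_{G^{(t)}}(v))\ge\beta$ reads identically from $u$'s side and from $w$'s side, hence $v\in N_{G^{(t+1)}}(u)$ iff $v\in N_{G^{(t+1)}}(w)$; so the two neighbourhoods coincide on $V\setminus\{u,w\}$. Whether $u$ and $w$ are adjacent in $G^{(t+1)}$ is in turn decided by the single symmetric comparison $f(d,d)\ge\beta$, which affects $N_{G^{(t+1)}}(u)$ and $N_{G^{(t+1)}}(w)$ in the same way -- here one uses, exactly as in the proof of Property~\ref{prop:Monotonicity}, the convention that a vertex is never counted among its own neighbours. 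Thus $u$ and $w$ make identical adjacency decisions in the next round, which is the content of the property.

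I do not anticipate any real obstacle: the only step that deserves to be spelled out is the reduction of the round‑$t$ dynamics to the adjacency rule $y\in N_{G^{(t+1)}}(x)\iff f(d_{G^{(t)}}(x),d_{G^{(t)}}(y))\ge\beta$ -- this is precisely where $\alpha=\beta$ and the fact that every pair is scheduled each round are used -- after which equality of degrees and symmetry of $f$ do all the work. The downstream value of this property is that it makes each degree‑equivalence class $R^{(t)}_i$ a set of twins in $G^{(t+1)}$, which is exactly what the renormalisation lemma and the induction on $|G^{(t)}|$ rely on.
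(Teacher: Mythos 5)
Your proof is correct and follows essentially the same route as the paper's: both reduce the round-$(t{+}1)$ adjacency to the pure threshold test $f(d_{G^{(t)}}(x),d_{G^{(t)}}(y))\ge\beta$ (using $\alpha=\beta$ and the full scheduler) and conclude that equal-degree vertices make identical decisions, exactly as in the proof of Property~\ref{prop:Monotonicity}. Your version is merely more explicit, in particular in handling the edge between $u$ and $w$ themselves, which the paper's one-line argument leaves implicit.
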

\begin{proof}

As in the proof of Property~\ref{prop:Monotonicity}, due to the equality of the degrees, it also holds that any neighbor $v$ of $u$ is a neighbor of $w$ and respectively any neighbor $v$ of $w$ is a neighbor of $u$.
\end{proof}

\noindent In the following, we discuss properties related to equivalence classes.


\begin{prop} \label{prop:ECNonIncreasing}
The number of equivalence classes in $G^{(t+1)}$ is less than or equal to the number of equivalence classes in $G^{(t)}$.
\end{prop}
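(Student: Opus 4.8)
The plan is to prove Property~\ref{prop:ECNonIncreasing} as a direct consequence of the two preceding properties, Property~\ref{prop:Monotonicity} and Property~\ref{prop:EqualDegree}. The key observation is that the map sending a node to its degree, composed over one round of the dynamics, cannot \emph{split} an equivalence class into two. First I would fix $t\geq 1$ and consider two nodes $u,w$ that lie in the same equivalence class of $R^{(t)}$, i.e.\ $d_{G^{(t)}}(u)=d_{G^{(t)}}(w)$. By Property~\ref{prop:EqualDegree}, $N_{G^{(t+1)}}(u)=N_{G^{(t+1)}}(w)$, and in particular $d_{G^{(t+1)}}(u)=d_{G^{(t+1)}}(w)$, so $u$ and $w$ still lie in a common equivalence class of $R^{(t+1)}$. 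Hence the partition of $V$ induced by $R^{(t+1)}$ is a \emph{coarsening} of the partition induced by $R^{(t)}$: every class of $R^{(t)}$ is contained in a single class of $R^{(t+1)}$.

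The number of classes of a coarser partition is no larger than the number of classes of the finer one, which gives $|G^{(t+1)}| \leq |G^{(t)}|$ immediately. To make this fully rigorous I would phrase it via a surjection: define $\phi$ from the set of $R^{(t)}$-classes to the set of $R^{(t+1)}$-classes by sending the class of $u$ at time $t$ to the class of $u$ at time $t+1$; the argument above shows $\phi$ is well defined (independent of the representative $u$), and it is clearly surjective since every node belongs to some $R^{(t)}$-class, so $|G^{(t+1)}| = |\mathrm{image}(\phi)| \leq |G^{(t)}|$. Note that Property~\ref{prop:Monotonicity} is not strictly needed for this particular statement, though it is the companion fact that controls how the classes are ordered; only Property~\ref{prop:EqualDegree} is essential here.

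There is no real obstacle in this proof; the only thing to be a little careful about is the range of $t$. The statement as given asserts the inequality for the transition from $G^{(t)}$ to $G^{(t+1)}$, and since Property~\ref{prop:EqualDegree} holds for all $t$ (its proof only uses that the potential is a proper function of the degrees at time $t$), the coarsening argument works for every $t\geq 0$; one could therefore even drop the $t\geq 1$ caveat, but stating it for $t\geq 1$ is harmless and consistent with the surrounding development. The proof is thus a two-line invocation of Property~\ref{prop:EqualDegree} followed by the elementary fact that coarsening a partition cannot increase the number of parts.
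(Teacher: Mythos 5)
Your proof is correct and takes essentially the same route as the paper, which likewise derives the claim in one line from Property~\ref{prop:EqualDegree}: equal degrees at time $t$ force equal neighborhoods (hence equal degrees) at time $t+1$, so classes cannot split and their number cannot increase. Your extra formalization via the surjection on classes, and the observation that Property~\ref{prop:Monotonicity} is not needed, are fine but add nothing beyond the paper's argument.
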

\begin{proof}

By Property~\ref{prop:EqualDegree}, nodes that belong to the same equivalence class at time $t>0$ will always belong to the same equivalence class for all $t'>t$.
\end{proof}


\begin{prop}  \label{prop:GraphsSameEC}
If $G^{(t+1)}$ has the same number of equivalence classes as $G^{(t)}$, then $\forall i$, $|R^{(t)}_i|=|R^{(t+1)}_i|$, where $|R^{(t)}_i|$ is the number of nodes in the equivalence class $R^{(t)}_i$.
\end{prop}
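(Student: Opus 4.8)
The plan is to exhibit an index map from the classes of $G^{(t)}$ to the classes of $G^{(t+1)}$, show it is a bijection, show it is order-preserving, conclude it is the identity, and then finish with a counting argument. First I would invoke Property~\ref{prop:EqualDegree}: any two nodes lying in the same class $R^{(t)}_i$ have identical neighborhoods, hence identical degrees, in $G^{(t+1)}$, so the whole of $R^{(t)}_i$ is contained in a single equivalence class of $G^{(t+1)}$. This yields a well-defined map $\phi:\{1,\dots,|G^{(t)}|\}\to\{1,\dots,|G^{(t+1)}|\}$ sending the index of a class at time $t$ to the index of the class of $G^{(t+1)}$ that contains it.

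Next I would argue that $\phi$ is a bijection. It is surjective, since every node of $V$ belongs to some class of $G^{(t)}$, so every class of $G^{(t+1)}$ is hit by at least one node; because the hypothesis gives $|G^{(t+1)}|=|G^{(t)}|$, a surjection between finite sets of equal cardinality is a bijection. Then I would use Property~\ref{prop:Monotonicity} to see that $\phi$ is non-decreasing: if $i<j$ then $d(R^{(t)}_i)>d(R^{(t)}_j)$, so in $G^{(t+1)}$ every node of $R^{(t)}_i$ has degree at least that of every node of $R^{(t)}_j$, which forces $\phi(i)\le\phi(j)$. A non-decreasing bijection of $\{1,\dots,|G^{(t)}|\}$ onto itself is the identity, so $\phi(i)=i$ for all $i$, i.e. $R^{(t)}_i\subseteq R^{(t+1)}_i$ as vertex sets.

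Finally, $R^{(t)}_i\subseteq R^{(t+1)}_i$ gives $|R^{(t)}_i|\le|R^{(t+1)}_i|$ for every $i$; summing over the $|G^{(t)}|=|G^{(t+1)}|$ classes and using $\sum_i|R^{(t)}_i|=n=\sum_i|R^{(t+1)}_i|$ forces term-by-term equality. I do not expect a real obstacle here; the only delicate point is to combine surjectivity (which upgrades $\phi$ to a bijection once the class counts coincide) with the monotonicity supplied by Property~\ref{prop:Monotonicity}, so as to conclude $\phi$ is the identity rather than merely some permutation — without monotonicity one could only deduce that the multiset of class sizes is preserved, not that $R^{(t)}_i$ and $R^{(t+1)}_i$ correspond index by index.
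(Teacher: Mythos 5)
Your proof is correct, but it takes a more elaborate route than the paper's. The paper argues by contradiction in two lines: if $|R^{(t)}_i|\neq|R^{(t+1)}_i|$ for some $i$, then some class at time $t$ must have split across two classes at time $t+1$, which Property~\ref{prop:EqualDegree} forbids. You instead build the induced index map $\phi$ explicitly, get bijectivity from the equal-count hypothesis, and then use Property~\ref{prop:Monotonicity} to show $\phi$ is order-preserving and hence the identity. Your extra step is not wasted: the statement asserts equality of sizes \emph{at the same rank} $i$, and since ranks are assigned by sorting degrees, one must rule out the scenario where no class splits but the degree ordering of the classes is permuted between $t$ and $t+1$ (in which case sizes would only agree as a multiset). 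The paper's terse inference "sizes differ $\Rightarrow$ some class split" silently assumes this alignment; your use of monotonicity supplies exactly the missing justification. So your argument is, if anything, the more complete of the two. One small simplification is available at the end: once $\phi$ is the identity you have $R^{(t)}_i\subseteq R^{(t+1)}_i$ for every $i$, and since both families partition $V$ into the same number of nonempty parts, the inclusions are already equalities of sets; the summation step is fine but not needed.
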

\begin{proof}

Suppose that the above does not hold. Then, there is some $i$ for which $|R^{(t)}_i| \neq |R^{(t+1)}_i|$. This means that there must be two nodes in some equivalence class $R^{(t)}_j$ that landed to different classes in $G^{(t+1)}$. However, Property~\ref{prop:EqualDegree} implies that this is impossible. 
\end{proof}

The following lemma shows how equivalence classes behave w.r.t. edge distribution.
\setcounter{lemma}{3}
\begin{lemma} \label{lem:MonotonicityEC}
If an arbitrary node $u$ in $R^{(t)}_i$ is connected with some node $w$ in $R^{(t)}_j$, then $u$ is connected with every node $x$ in every equivalence class $R^{(t)}_k$, such that $k\leq j$ and $t>0$.
\end{lemma}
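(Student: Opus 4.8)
The plan is to prove Lemma~\ref{lem:MonotonicityEC} by leveraging the monotonicity properties already established, specifically Property~\ref{prop:Monotonicity} and the non-decreasing nature of the proper function $f$. The core observation is that if $u \in R^{(t)}_i$ is connected to $w \in R^{(t)}_j$ after a round (i.e., in $G^{(t)}$ with $t > 0$), then the edge $(u,w)$ survived the update at time $t-1$, which means $\mathcal{E}^{(t-1)}(u,w) \geq \alpha = \beta$. But I must be careful: the lemma speaks about connections \emph{at} time $t$, so I would actually argue at the level of the round that produced $G^{(t)}$. Since $t > 0$, the graph $G^{(t)}$ is the result of applying the protocol to $G^{(t-1)}$, so $u$ being connected to $w$ in $G^{(t)}$ means $\beta \le \mathcal{E}^{(t-1)}(u,w) = f(d_{G^{(t-1)}}(u), d_{G^{(t-1)}}(w))$.

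The second step is to translate the hypothesis $k \le j$ (degree-rank ordering) into a statement about degrees. By definition of the ranking, $k \le j$ means $d(R^{(t)}_k) \ge d(R^{(t)}_j)$, so any node $x \in R^{(t)}_k$ has $d_{G^{(t)}}(x) \ge d_{G^{(t)}}(w)$. The subtlety is that I need this degree inequality to hold at time $t-1$ as well, since that is when the potential was evaluated. Here I would invoke Property~\ref{prop:Monotonicity} in the appropriate direction, or rather note that by Property~\ref{prop:EqualDegree} and Property~\ref{prop:Monotonicity} applied across the round $t-1 \to t$, the relative ordering of degrees is preserved; combined with the fact that equivalence classes only merge (Property~\ref{prop:ECNonIncreasing}), the rank order at time $t$ reflects a consistent rank order at time $t-1$. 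Thus $d_{G^{(t-1)}}(x) \ge d_{G^{(t-1)}}(w)$ for every $x \in R^{(t)}_k$ with $k \le j$.

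The third step closes the argument: since $f$ is non-decreasing in both arguments and symmetric, we get $\mathcal{E}^{(t-1)}(u,x) = f(d_{G^{(t-1)}}(u), d_{G^{(t-1)}}(x)) \ge f(d_{G^{(t-1)}}(u), d_{G^{(t-1)}}(w)) = \mathcal{E}^{(t-1)}(u,w) \ge \beta$. By the protocol rule (branch 3, with $\alpha = \beta$), the edge $(u,x)$ is present in $G^{(t)}$, i.e., $u$ is connected with $x$ at time $t$. Since $x$ was an arbitrary node of an arbitrary class $R^{(t)}_k$ with $k \le j$, this establishes the claim.

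The main obstacle I anticipate is the bookkeeping around \emph{which} time step the degree comparisons must be made at — the lemma is phrased purely in terms of $G^{(t)}$, but the potential that decided the edges in $G^{(t)}$ lives at time $t-1$. I need the degree-ordering to be "transportable" backward one step, and for that I rely on the fact (implicit in Properties~\ref{prop:Monotonicity}--\ref{prop:GraphsSameEC}) that the partition into degree-classes and their relative order evolves monotonically: a node with larger or equal degree at time $t-1$ has larger or equal degree at time $t$, and nodes of equal degree stay together. One clean way to avoid the headache entirely is to observe that, since the scheduler activates all $\binom{n}{2}$ pairs every round and $f$ depends only on degrees, $G^{(t)}$ for $t \ge 1$ is completely determined by the degree sequence of $G^{(t-1)}$, and in fact from round $1$ onward the adjacency of a node is an "upward-closed" set with respect to the degree order — precisely the content we are proving. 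So the proof is really an unwinding of the protocol rule together with monotonicity of $f$, and the only care needed is to phrase the induction/time-indexing cleanly.
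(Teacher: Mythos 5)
Your overall strategy is the same as the paper's (the paper's own proof is a one-liner: the rank definition gives $d_{G^{(t)}}(x)\geq d_{G^{(t)}}(w)$ for $k\leq j$, and then Property~\ref{prop:Monotonicity} -- really the neighborhood-containment argument inside its proof -- yields that $x$ is also a neighbor of $u$), and your unwinding of the protocol rule via $\mathcal{E}^{(t-1)}(u,w)\geq\beta$ and monotonicity of $f$ is exactly the intended mechanism. However, your second step contains a claim that is not literally valid: from "the relative order of degrees is preserved and classes only merge" you conclude $d_{G^{(t-1)}}(x)\geq d_{G^{(t-1)}}(w)$ for every $x\in R^{(t)}_k$ with $k\leq j$. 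Property~\ref{prop:Monotonicity} only transports degree inequalities \emph{forward}; its contrapositive handles the strict case $k<j$ (if $d_{G^{(t)}}(x)>d_{G^{(t)}}(w)$ then indeed $d_{G^{(t-1)}}(x)>d_{G^{(t-1)}}(w)$), but when $k=j$ the classes of $x$ and $w$ may have merged during round $t-1$ with $d_{G^{(t-1)}}(x)<d_{G^{(t-1)}}(w)$, in which case your backward inequality is false and the chain $f(d_{G^{(t-1)}}(u),d_{G^{(t-1)}}(x))\geq f(d_{G^{(t-1)}}(u),d_{G^{(t-1)}}(w))$ does not follow.

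The lemma still holds in that corner case, and the patch is short: if $d_{G^{(t-1)}}(w)>d_{G^{(t-1)}}(x)$, the argument in the proof of Property~\ref{prop:Monotonicity} gives $N_{G^{(t)}}(x)\setminus\{w\}\subseteq N_{G^{(t)}}(w)\setminus\{x\}$; if this containment were strict we would get $d_{G^{(t)}}(x)<d_{G^{(t)}}(w)$, contradicting $k=j$, so the two neighborhoods coincide (up to $x,w$ themselves) and $u\in N_{G^{(t)}}(w)$ forces $u\in N_{G^{(t)}}(x)$. (If instead $x$ and $w$ were already in the same class at $t-1$, Property~\ref{prop:EqualDegree} gives $N_{G^{(t)}}(x)=N_{G^{(t)}}(w)$ directly.) To be fair, the paper's own proof glosses over this same time-indexing point entirely, so with this one case added your write-up is actually the more complete of the two.
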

\begin{proof}
Due to Property~\ref{prop:Monotonicity}, for all nodes $x\in R^{(t)}_k$ it holds that $d_{G^{(t)}}(x) \geq d_{G^{(t)}}(w)$ and so they are also neighbors of $u$.
\end{proof}

We prove by induction that this \abd always stabilizes in at most $|G^{(0)}|+1$ steps. To begin with, it is obvious that the clique $\mathcal{K}_n$ as well as the null graph $\overline{\mathcal{K}_n}$ both stabilize in at most one step, for any value of $\beta$. The following renormalization lemma describes how the number of equivalence classes is reduced and is crucial to the induction proof. 

\begin{lemma}	\label{lem:Renormalization}
If $d(R^{(t)}_1) = n-1$, $\forall t\geq c, c\in \mathbb{N}$, and the subgraph $G^{(c)} \setminus R^{(c)}_1$ stabilizes for any value of $\beta$ and proper function $f$, then $G^{(c)}$ stabilizes as well. Similarly, if $d(R^{(t)}_{|G^{(t)}|}) = 0$, $\forall t\geq c, c\in \mathbb{N}$, and the subgraph $G^{(c)} \setminus R^{(c)}_{|G^{(c)}|}$ stabilizes for any value of $\beta$ and proper function $f$, then $G^{(c)}$ stabilizes as well. The time it takes for $G^{(c)}$ to stabilize is the same as the time it takes for the induced subgraph to stabilize for both cases.
\end{lemma}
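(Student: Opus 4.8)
The plan is to handle the first case ($d(R^{(t)}_1)=n-1$ for all $t\geq c$); the second ($d(R^{(t)}_{|G^{(t)}|})=0$) is symmetric and follows by an analogous argument. First I would observe what the hypothesis $d(R^{(t)}_1)=n-1$ buys us: every node of maximum degree is connected to \emph{every} other node of the graph at every time $t\geq c$. By Lemma~\ref{lem:MonotonicityEC}, since a node $u\in R^{(t)}_1$ is connected to a node in the lowest-rank class, it is connected to every node in every class; in particular the nodes of $R^{(t)}_1$ are mutually connected and are connected to all of $V\setminus R^{(t)}_1$. So $R^{(t)}_1$ behaves like a fixed "universal" block sitting on top of the rest of the graph.

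The key step is to show that the dynamics on $H^{(t)}:=G^{(t)}\setminus R^{(t)}_1$ (the induced subgraph on the non-maximum-degree nodes) is \emph{exactly} a copy of \abd run on $H^{(c)}$, only with a shifted proper function. Let $r=|R^{(c)}_1|$; by Property~\ref{prop:ECNonIncreasing} and the standing hypothesis, $R^{(t)}_1$ is the same node set with the same size $r$ for all $t\geq c$ (it cannot merge downward, since its degree $n-1$ strictly exceeds every other degree, and it cannot gain members without its own degree changing). For two nodes $x,y$ both outside $R^{(t)}_1$, their degree in $G^{(t)}$ is exactly $d_{H^{(t)}}(x)+r$ and $d_{H^{(t)}}(y)+r$, because each is adjacent to all $r$ universal nodes and to no one else outside $H^{(t)}$. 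Hence
\[
\mathcal{E}^{(t)}_{G}(x,y)=f\bigl(d_{H^{(t)}}(x)+r,\;d_{H^{(t)}}(y)+r\bigr)=:f'\bigl(d_{H^{(t)}}(x),\;d_{H^{(t)}}(y)\bigr),
\]
and $f'(a,b):=f(a+r,b+r)$ is again symmetric and non-decreasing, i.e.\ proper. Since the scheduler contains all $\binom{n}{2}$ pairs, in particular it contains all pairs inside $H^{(t)}$, so the edges of $H^{(t)}$ evolve precisely according to \abd with potential $f'$ and thresholds $\alpha=\beta$. Meanwhile, every edge incident to $R^{(t)}_1$ is preserved forever: for a pair $(u,x)$ with $u\in R^{(t)}_1$, one checks $\mathcal{E}^{(t)}_G(u,x)=f(n-1,d_{G^{(t)}}(x))\geq f(d_{G^{(t)}}(w),d_{G^{(t)}}(w))$ for any $w$; more to the point, such an edge exists at time $t$ and, by Lemma~\ref{lem:MonotonicityEC} applied at time $t+1$ (using Property~\ref{prop:Monotonicity} to see $R^{(t+1)}_1$ still has degree $n-1$, which is exactly our hypothesis), it persists. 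So the only possibly-changing edges of $G^{(t)}$, $t\geq c$, are the internal edges of $H^{(t)}$.

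Putting it together: by hypothesis $H^{(c)}$ stabilizes under \abd for \emph{any} proper function, in particular for $f'$; say it stabilizes at time $c+\tau$. Then for $t\geq c+\tau$ the edges inside $H^{(t)}$ are frozen, and the edges touching $R^{(t)}_1$ were frozen from time $c$ on, so $G^{(t)}=G^{(c+\tau)}$ for all $t\geq c+\tau$; thus $G^{(c)}$ stabilizes, and it does so in exactly $\tau$ steps, the same number the induced subgraph needs. The main obstacle I anticipate is the bookkeeping needed to justify that $R^{(t)}_1$ is genuinely time-invariant as a set and that no node ever enters or leaves it — this rests on combining Property~\ref{prop:Monotonicity} (relative degree order is preserved), Property~\ref{prop:EqualDegree} (equal-degree nodes stay together), and the standing hypothesis that the top class keeps degree $n-1$ — and on verifying that the potential seen \emph{inside} $H^{(t)}$ really is a clean function of the $H$-degrees alone, with the additive offset $r$ being the same for every node of $H^{(t)}$. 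Once that invariant is nailed down, the reduction to a shifted proper function is routine and the induction closes.
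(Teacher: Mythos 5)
Your proof takes essentially the same route as the paper's: freeze the universal block $R^{(c)}_1$, observe that it contributes a constant offset $r=|R^{(c)}_1|$ to every degree outside it, and absorb that offset into the shifted proper function $f'(a,b)=f(a+r,b+r)$ so that the induced subgraph evolves exactly as an \abd instance covered by the hypothesis. One small caveat: your side claim that $R^{(t)}_1$ is time-invariant as a set is neither needed nor quite right (a node outside may later attain degree $n-1$ and join the top class; the paper only asserts $R^{(t)}_1\subseteq R^{(t+1)}_1$), but working throughout with the fixed set $R^{(c)}_1$, as you effectively do when writing $d_{H^{(t)}}(x)+r$, avoids the issue entirely.
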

\begin{proof}
The main idea is that we consider two different sets of nodes: $R^{(c)}_1$ and $V\setminus R^{(c)}_1$. Due to our hypothesis, at all future time steps the edges between these two groups, and the edges with both endpoints in $R^{(c)}_1$ are fixed. Concerning the edges with both endpoints in $V\setminus R^{(c)}_1$, we can almost study this subgraph independently. That's because the effect of $R^{(c)}_1$ on $V\setminus R^{(c)}_1$ is completely predictable: it always increases the degree of all nodes by the exact same amount. The same reasoning applies for $R^{(c)}_{|G^{(c)}|}$.

More formally, by Property~\ref{prop:Monotonicity}, for all $t\geq c$ it holds that $R^{(t)}_1 \subseteq R^{(t+1)}_1$. This means that the nodes in $R^{(c)}_1$ are always connected to every node after time $c$. As a result, for all $u\in V\setminus R^{(c)}_1$ it holds that their degree in the induced subgraph $G^{(t)}\setminus R^{(c)}_1$ is $d_{G^{(t)}}(u)-|R^{(c)}_1|$. 
Thus, the decision for the existence of an edge $(u,v)$, where $u,v\in G^{(t)}\setminus R^{(c)}_1$ is the following:
\[\mathcal{E}^{(t)}(u,v)=f(d_{G^{(t)}\setminus R^{(c)}_1}(u)+|R^{(c)}_1|,d_{G^{(t)}\setminus R^{(c)}_1}(v)+|R^{(c)}_1|)\geq \beta\]
which can be written as: 
\[\mathcal{E}^{(t)}(u,v)=g(d_{G^{(t)}\setminus R^{(c)}_1}(u),d_{G^{(t)}\setminus R^{(c)}_1}(v))\geq \beta\]
where
\[g(x,y)=f(x+|R^{(c)}_1|,y+|R^{(c)}_1|)\]

Clearly, $g$ is a proper function assuming that $f$ is a proper function. Thus, the choice of whether the edge exists between $u$ and $v$ is equivalent between $G^{(t)}$ and $G^{(t)}\setminus R^{(c)}_1$ by  appropriately changing $f$ to $g$. But due to our hypothesis $G^{(c)}\setminus R^{(c)}_1$ stabilizes, and thus $G^{(c)}$ also stabilizes in the same number of steps. Note that we need not compute $g$ since this is only an analytical construction; the dynamic process continues as defined.
The proof of the second part of the lemma is similar in idea but much simpler since function $f$ does not change due to the fact that the removed nodes have degree $0$.
\end{proof}

The following theorem establishes that this \abd stabilizes in linear time. 
\setcounter{theorem}{4}
\begin{theorem} \label{thm:beta_degree}
When $\alpha=\beta$, $f$ is proper, $\mathcal{E}(u,v)=f(d_{G^{(t)}}(u),d_{G^{(t)}}(v))$, and the scheduler $C^{(t)}$ contains all $\binom{n}{2}$ possible pairwise interactions, \abd stabilizes on given $G^{(0)}$ in at most $|G^{(0)}|+1$ steps.
\end{theorem}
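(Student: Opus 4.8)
The plan is to prove Theorem~\ref{thm:beta_degree} by strong induction on $|G^{(0)}|$, the number of equivalence classes in the initial graph, using Lemma~\ref{lem:Renormalization} as the inductive step. First I would dispose of the base cases: if $|G^{(0)}| = 1$ then $G^{(0)}$ is regular, and by Property~\ref{prop:EqualDegree} all nodes share the same neighborhood in $G^{(1)}$; a graph in which every node has the same neighborhood is either $\mathcal{K}_n$ or $\overline{\mathcal{K}_n}$, and both are fixed points (for any $\beta$, since $f$ is proper the common potential is either $\geq\beta$ everywhere or $<\beta$ everywhere), so stabilization occurs within $2 \leq |G^{(0)}|+1$ steps. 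This also handles the fact, observed just before the lemma, that after one round we may assume we are in a ``clean'' state where Properties~\ref{prop:Monotonicity}--\ref{prop:GraphsSameEC} and Lemma~\ref{lem:MonotonicityEC} all apply.

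For the inductive step, assume the claim holds for all initial graphs with fewer than $\kappa$ equivalence classes, and let $|G^{(0)}| = \kappa \geq 2$. After one step we look at $G^{(1)}$. By Property~\ref{prop:ECNonIncreasing} the number of classes is non-increasing. If at some round $t \geq 1$ the number of classes strictly drops below $\kappa$, then by relabeling time we may apply the induction hypothesis to $G^{(t)}$ (which has $|G^{(t)}| < \kappa$ classes) and conclude stabilization within $|G^{(t)}|+1$ additional steps; counting the steps taken so far, the total stays within $|G^{(0)}|+1$. So the interesting case is when the number of equivalence classes stays exactly $\kappa$ forever from round $1$ on. Then by Property~\ref{prop:GraphsSameEC} the class sizes are frozen, and the key structural claim is that in this regime the top class $R^{(1)}_1$ must satisfy $d(R^{(t)}_1) = n-1$ for all $t \geq 1$, or else the bottom class $R^{(1)}_{\kappa}$ satisfies $d(R^{(t)}_{|G^{(t)}|}) = 0$ for all $t\geq 1$ --- i.e., at least one ``extreme'' class gets pinned at a trivial degree. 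Granting this, Lemma~\ref{lem:Renormalization} reduces stabilization of $G^{(1)}$ to stabilization of the induced subgraph on $V \setminus R^{(1)}_1$ (resp.\ $V\setminus R^{(1)}_{\kappa}$), which has at most $\kappa-1$ classes, so the induction hypothesis finishes the argument with the stated step count (Lemma~\ref{lem:Renormalization} says the subgraph and the whole graph stabilize in the same number of steps).

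The main obstacle is establishing that structural claim: in the ``constant number of classes'' regime, some extreme class is degree-trivial. I would argue it as follows. By Lemma~\ref{lem:MonotonicityEC}, the adjacency pattern between equivalence classes at any time $t\geq 1$ is a ``threshold-like'' pattern: there is a cutoff so that a node in class $R^{(t)}_i$ is adjacent to \emph{all} nodes in every class of rank $\leq$ some index (and to the appropriate portion of its own class). Combined with Property~\ref{prop:Monotonicity} (which preserves the ordering of degrees) and Property~\ref{prop:GraphsSameEC} (frozen class sizes), the ``shape'' of this threshold pattern is described by a bounded amount of integer data that can only evolve monotonically in a suitable sense; since there are finitely many graphs, it must become periodic, and then one shows a nontrivial period would force two classes to merge (contradicting constancy of $\kappa$) unless the extreme class is already saturated at degree $n-1$ or $0$. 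Making this precise is where the real work lies; I expect the cleanest route is to track the highest-degree class $R^{(t)}_1$: if it is not fully connected, then whether its nodes keep/gain their edges is governed by $f$ evaluated at the (frozen-size, ordered) degree sequence, and a short case analysis on the first round where its neighborhood would change, using that $f$ is non-decreasing, shows either it monotonically grows to degree $n-1$ (and stays, by Property~\ref{prop:Monotonicity}), or symmetrically the bottom class collapses to degree $0$, reaching the hypothesis of Lemma~\ref{lem:Renormalization} within one extra step. The ``$+1$'' in the bound is exactly the single initialization round needed before the clean properties kick in.
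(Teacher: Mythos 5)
Your skeleton (induction on the number of equivalence classes, with Lemma~\ref{lem:Renormalization} as the reduction step and the regular graph as the base case) is the same as the paper's, but the step you defer as ``the real work'' is precisely the content of the proof, and the route you sketch for it does not go through. Your ``key structural claim'' --- that if the class count stays at $\kappa$ forever then one extreme class is pinned at degree $n-1$ or $0$ for all $t\geq 1$ --- is not what is actually established, and your proposed argument via eventual periodicity and ``a nontrivial period forces two classes to merge'' is not a proof: nothing in Properties~\ref{prop:Monotonicity}--\ref{prop:GraphsSameEC} gives you a monotone quantity on the threshold pattern, and the paper never argues via periodicity at all. The correct dichotomy is different: either an extreme class is pinned, \emph{or the graph is already a fixed point after one more step}. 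Two ingredients make this work, and both are missing from your proposal. First, to get the ``for all $t\geq c$'' hypothesis that Lemma~\ref{lem:Renormalization} needs, the paper splits on whether $f(n-1,0)<\beta$ or $f(n-1,0)\geq\beta$: in the first case degree $0$ is \emph{absorbing} (an isolated node can never regain an edge since $f(d,0)\leq f(n-1,0)<\beta$), in the second case degree $n-1$ is absorbing. Without identifying an absorbing extreme you cannot rule out the scenario where the pinned class alternates between rounds, which is exactly what would break the renormalization.

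Second, in the remaining case (class count unchanged from $t_0$ to $t_0+1$ and no class at the absorbing extreme degree in either round), the paper closes the induction with a counting argument rather than a convergence argument: by Lemma~\ref{lem:MonotonicityEC} each class's neighborhood is a ``prefix'' of classes, so its degree is one of only $|G^{(t_0)}|+1$ possible values determined by the (frozen, by Property~\ref{prop:GraphsSameEC}) class sizes; excluding the forbidden extreme value leaves exactly $|G^{(t_0)}|$ values for $|G^{(t_0)}|$ pairwise distinct degrees, so the degree sequence is forced to be identical at $t_0$ and $t_0+1$, and Lemma~\ref{lem:MonotonicityEC} then forces $G^{(t_0+1)}=G^{(t_0)}$. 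This is why the bound is $|G^{(0)}|+1$ with no slack: each round either merges classes, hands you an absorbing extreme class to peel off, or certifies that you have already stabilized. A smaller bookkeeping issue: your ``if the class count drops at some round $t$, relabel time'' step only yields the claimed total of $|G^{(0)}|+1$ if you compare consecutive rounds ($t_0$ versus $t_0+1$) as the paper does; allowing the drop to occur at an arbitrary later round $t$ costs you $t$ wasted steps that the induction hypothesis does not pay for.
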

\begin{proof}
By Property~\ref{prop:ECNonIncreasing} we have that $|G^{(1)}|\le |G^{(0)}|$. Therefore, it suffices to prove that \abd stabilizes in at most $|G^{(1)}|+1$ steps, or equivalently that it stabilizes in at most $|G^{(1)}|$ steps after time $1$; for technical reasons, we prove that for any $t_0>0$, \abd stabilizes in at most $|G^{(t_0)}|$ steps after $t_0$. This is necessary for some of the needed tools to work (for example Lemma~\ref{lem:MonotonicityEC}, which doesn't work for time $0$).

We prove our claim inductively, on the number of equivalence classes at time $t_0$. For the base case, if $|G^{(t_0)}|=1$, then we have a regular graph. If $f(d(R^{(t_0)}_{1}), d(R^{(t_0)}_{1})) < \beta$, we get that $G^{(t_0+1)}$ is the null graph $\overline{\mathcal{K}_n}$, which indeed stabilizes because $f(d(R^{(t_0+1)}_{1}), d(R^{(t_0+1)}_{1})) = f(0,0) \le f(d(R^{(t_0)}_{1}), d(R^{(t_0)}_{1})) < \beta$. Similarly, if $f(d(R^{(t_0)}_{1}), d(R^{(t_0)}_{1})) \ge \beta$ we get that $G^{(t_0+1)}$ is the complete graph $\mathcal{K}_n$, which stabilizes because $f(d(R^{(t_0+1)}_{1}), d(R^{(t_0+1)}_{1})) = f(n-1,n-1) \ge f(d(R^{(t_0)}_{1}), d(R^{(t_0)}_{1})) \ge \beta$.

For the inductive step, suppose that $|G^{(t_0)}|>1$. If $|G^{(t_0+1)}| < |G^{(t_0)}|$, then the lemma follows by our inductive hypothesis. Else, we discern two cases, namely whether $f(n-1,0) < \beta$ or $f(n-1,0) \ge \beta$.

We begin with the case $f(n-1,0) < \beta$. If at some time step $t\ge t_0$ it holds that $d(R^{(t)}_{|G^{(t)}|}) = 0$, then for all $t'\ge t$ it still holds that $d(R^{(t')}_{|G^{(t')}|}) = 0$. To see this, notice that if it does not hold, then there exists a minimal $t'>t$ such that a node $u \in R^{(t)}_{|G^{(t)}|}$ has degree $d^{(t')}(u) > 0$. But this means that there exists some vertex $v \neq u$ such that $f(d^{(t'-1)}(v), d^{(t'-1)}(u)) = f(d^{(t'-1)}(v),0) \ge \beta$. But since $d^{(t'-1)}(v) \le n-1$, and $f(n-1,0) < \beta$, we reach a contradiction.

By the above observation and Lemma~\ref{lem:Renormalization}, it immediately follows that if $d(R^{(t_0)}_{|G^{(t_0)}|}) = 0$ or $d(R^{(t_0+1)}_{|G^{(t_0+1)}|}) = 0$, then our lemma holds.

Therefore, we are only left with the case where $|G^{(t_0+1)}| = |G^{(t_0)}|$ and no node has degree $0$, neither in $G^{(t_0)}$ nor in $G^{(t_0+1)}$. For any $i$, the $i$-th equivalence class of $G^{(t_0)}$ and the $i$-th equivalence class of $G^{(t_0+1)}$ have the same number of nodes, by Property~\ref{prop:GraphsSameEC}. If they also have the same degree, then Lemma~\ref{lem:MonotonicityEC} shows that the two graphs are equal, and thus we have stabilization in $0$ steps.

By Lemma~\ref{lem:MonotonicityEC}, each of the $|G^{(t_0)}|$ equivalence classes at time $t_0$ has only $|G^{(t_0)}|+1$ possible values for its degree, and, by definition, no two classes have the same degree. However, one of these values is $0$, which we ruled out for any equivalence class, meaning that there are only $|G^{(t_0)}|$ possible values for the $|G^{(t_0)}|$ pairwise disjoint degrees. The same argument can be made for $t_0+1$. However, by Property~\ref{prop:GraphsSameEC}, we get that the possible values for both time steps are the same, concluding that for all $i \in \{1, \ldots, |G^{(t_0)}|\}$, we have $d(R^{(t_0)}_i) = d(R^{(t_0+1)}_i)$.

The case $f(n-1,0) \ge \beta$ is completely similar. If at some time step $t\ge t_0$ it holds that $d(R^{(t)}_{1}) = n-1$, then for all $t'\ge t$ it still holds that $d(R^{(t')}_{1}) = n-1$. To see this, notice that if it does not hold, then there exists a minimal $t'>t$ such that a node $u \in R^{(t)}_{1}$ has degree $d^{(t')}(u) < n-1$. But this means that there exists some vertex $v \neq u$ such that $f(d^{(t'-1)}(u), d^{(t'-1)}(v)) = f(n-1,d^{(t'-1)}(v)) < \beta$. But since $d^{(t'-1)}(v) \ge 0$, and $f(n-1,0) \ge \beta$, we reach a contradiction.

By the above observation and Lemma~\ref{lem:Renormalization}, it immediately follows that if $d(R^{(t_0)}_{1}) = n-1$ or $d(R^{(t_0+1)}_{1}) = n-1$, then our lemma holds. Therefore, we are only left with the case where $|G^{(t_0+1)}| = |G^{(t_0)}|$ and no node has degree $n-1$, neither in $G^{(t_0)}$ nor in $G^{(t_0+1)}$. 

Therefore, we are only left with the case where $|G^{(t_0+1)}| = |G^{(t_0)}|$ and no node has degree $0$, neither in $G^{(t_0)}$ nor in $G^{(t_0+1)}$. For any $i$, the $i$-th equivalence class of $G^{(t_0)}$ and the $i$-th equivalence class of $G^{(t_0+1)}$ have the same number of nodes, by Property~\ref{prop:GraphsSameEC}. If they also have the same degree, then Lemma~\ref{lem:MonotonicityEC} shows that the two graphs are equal, and thus we have stabilization in $0$ steps.

By Lemma~\ref{lem:MonotonicityEC}, each of the $|G^{(t_0)}|$ equivalence classes at time $t_0$ has only $|G^{(t_0)}|+1$ possible values for its degree, and, by definition, no two classes have the same degree. However, one of these values is $n-1$, which we ruled out for any equivalence class, meaning that there are only $|G^{(t_0)}|$ possible values for the $|G^{(t_0)}|$ pairwise disjoint degrees. The same argument can be made for $t_0+1$. However, by Property~\ref{prop:GraphsSameEC}, we get that the possible values for both time steps are the same, concluding that for all $i \in \{1, \ldots, |G^{(t_0)}|\}$, we have $d(R^{(t_0)}_i) = d(R^{(t_0+1)}_i)$.

\end{proof}

\section{Turing-Completeness} \label{app:sec_tc}
\subsection{Cellular Automata and Rule 110}  \label{app:ssec_ca-rule110}
A one-dimensional cellular automaton, or, as called by Wolfram, an elementary cellular automaton, is a discrete model of computation. It consists of a one-dimensional grid of infinitely many cells, each containing a binary value. The value of all cells is updated synchronously, in discrete time steps. Each cell updates its value based on its own value and the values of its two neighboring cells.

Since the new value of each cell depends on $3$ binary values, there are only $8$ different cases for this update. We write $001$ for the case where the left neighbor's value and the current value of a cell is $0$ while the right neighbor's value is $1$, $101$ for the case where both neighbors have value $1$ while the current value is $0$, and so on. Wolfram proposed the following numbering scheme for elementary cellular automata. Suppose we create a binary number whose most significant bit is the updated value of a cell in case $111$, the second most significant bit is the updated value in case $110$, and so on until the least significant bit, the updated value in case $000$. If we acquire number $X$ by translating this binary number to decimal, then this particular cellular automaton is \emph{Rule $X$}.

Therefore, Rule $110$ is the cellular automaton corresponding to the binary number $01101110$; simply put, the updated value of a cell is equal to its right neighbor's value, if its current value is $0$. Else, it is $0$ iff both its neighbors have value $1$. What is interesting about Rule $110$ is that although it is very easy to describe, Cook proved it to be Turing-Complete \cite{DBLP:journals/compsys/000104}. One shall think of the initial configuration of the cells to contain both the program and its input; if the Turing machine corresponding to the program would halt on this input, then Rule $110$ stabilizes to a state that keeps on repeating forever. From this state, one is able to directly retrieve what the Turing machine would output. This allows us to prove Turing Completeness for some model of computation by just showing that it is able to simulate Rule $110$, which is much simpler than a Turing machine.

\subsection{Proofs of Turing Completeness section}  \label{app:ssec_tc-proofs}
For reference in the proofs that follow, Figure~\ref{fig:CG} depicts how $CG(i)$ (cell $i$) is connected to $CG(i+1)$ (cell $i+1$) and $CG(i-1)$ (cell $i-1$). 

\begin{figure}[t]
\centering
\includegraphics[scale=0.14]{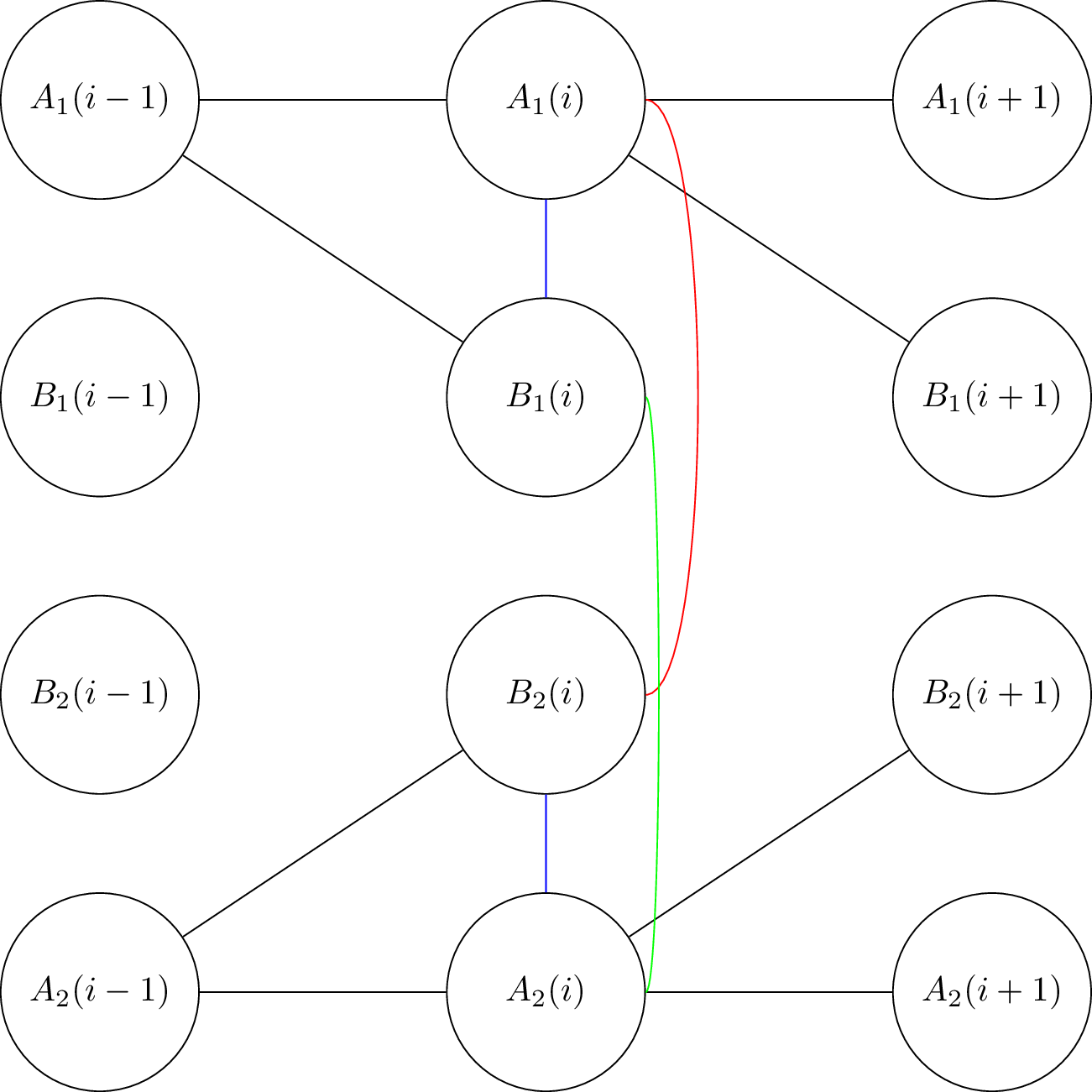}
\caption{Each circle represents a $PCG$ and each line represents a connection between $PCG$s ($4$ always-on gadgets) as in Figure~\ref{fig:gadgets}. Only connections relevant to $A_1(i),A_2(i),B_1(i),B_2(i)$ are shown. The $4$ connections in the second column (again each one is $4$ always-on gadgets) are internal connections of $CG(i)$. All other connections correspond to how $CG(i-1)$ is connected with $CG(i)$ and $CG(i)$ is connected with $CG(i+1)$. We prove that these connections are always preserved.}
\label{fig:CG}
\end{figure}

\setcounter{lemma}{9}
\begin{lemma} \label{app:tc-main-structure}
If there exists a flip $(x,y)$-gadget connected to an $A_j(i)$ $PCG$ in $G^{(0)}$, then the edge $(x,y)$ at time $t$ exists if and only if $t \in \mathbb{N}\cup \{0\}$. Similarly, if there exists a flip $(x,y)$-gadget connected to a $B_j(i)$ $PCG$ in $G^{(0)}$, then the edge $(x,y)$ exists if and only if $t \not \in \mathbb{N}\cup \{0\}$. Finally, all other edges exist at any time step if and only if they exist in $G^{(0)}$, with the exception of edges between $(h,l)$ nodes of a $PCG$.
\end{lemma}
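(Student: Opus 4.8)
The plan is to prove the lemma by strong induction on the time step $t$ (counting in multiples of $0.5$), establishing at each step that the ``gadget structure'' of $G^{(t)}$ is exactly as prescribed: every always-on $(x,y)$-gadget is a clique on its $22$ nodes, every flip $(x,y)$-gadget has its two sub-cliques intact with only the $(x,y)$-edge possibly toggling, the auxiliary-node connections of every $PCG$ are as in $G^{(0)}$, and the only edges that differ between consecutive configurations are the $(h,l)$-edges of $PCG$s together with the ``flip'' edges (which behave as claimed). The base case $t=0$ is the definition of the construction. For the inductive step, assuming the structure holds at time $t$, I would compute the potential $\mathcal{E}^{(t)}(x,y)$ for every relevant pair and verify via the four-branch definition of $\mathcal{E}$ that the protocol ($\alpha=\beta$, all $\binom n2$ interactions active) produces exactly the prescribed $G^{(t+0.5)}$.

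The argument splits into cases according to which kind of pair $(x,y)$ we look at. First, for a pair of nodes that lie inside a common always-on gadget: such a pair has a large number of common neighbors (the other $20$ clique nodes, plus whatever the two special endpoints share), so $CN^{(t)}(x,y)$ is large enough to fall into branch $4$ (the ``otherwise'' branch), giving $\mathcal{E}=\beta-1+|E^{(t)}(x,y)|$; if the edge is present this is $\beta$, so it is maintained, and if some edge were absent it would stay absent — but by the inductive hypothesis all such edges are present, so the clique is preserved. The key numeric check here is that $CN^{(t)}(x,y)$ is neither in $[40,41]$ nor makes $CN+|E|$ land in $[66,71]$, i.e.\ the always-on internal pairs genuinely hit branch $4$; this needs the exact counts $22$, $60$, etc., to be bookkept carefully. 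Second, for the special pair $(x,y)$ of a flip gadget: here the two clique-halves contribute $40$ common neighbors (or $41$ with the edge), so we land in branch $3$, $\mathcal{E}=\beta-|E^{(t)}(x,y)|$, which forces the edge to flip — present becomes absent, absent becomes present — exactly as the lemma asserts, and this is what makes flip gadgets attached to $A_j(i)$ (edge present at integer times) and to $B_j(i)$ (edge present at half-integer times) alternate in opposite phase, given their opposite initialization at $t=0$. Third, for $(h,l)$ pairs of $PCG$s and for cross-gadget always-on connections, one checks that $CN+|E|$ lands in $[66,71]$ at the appropriate parity (using the common-neighbor counts $70/10$ for $A_j$ and $66/6$ for $B_j$ that the paper computes just after the lemma), so branches $1$–$2$ govern them; these are precisely the edges the lemma excludes from its ``rigidity'' claim, and one only needs to confirm that the cross-gadget always-on edges (lines in Figure~\ref{fig:CG}) are nonetheless preserved — i.e.\ for those pairs $|E^{(t)}|=1$ and the branch-$1$/branch-$2$ value is $\ge\beta$, keeping the edge. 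Fourth, for pairs of nodes lying in different gadgets with no prescribed connection, one verifies $\mathcal{E}<\beta$ (again landing in branch $4$ with $|E^{(t)}(x,y)|=0$, giving $\beta-1$), so no spurious edge is ever created; this is the part that guarantees ``no new gadgets are formed.''

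I expect the main obstacle to be the sheer volume of case analysis in the inductive step: one must enumerate every type of node pair occurring in the construction (within-always-on, flip-special-pair, $h$–$l$ of the same $PCG$, $h$/$l$ to auxiliary node, $h$/$l$ across a $PCG$-connection, auxiliary-to-auxiliary across gadgets, and genuinely unrelated pairs) and, for each, compute $CN^{(t)}$ and $CE^{(t)}$ at both integer and half-integer parities using the inductive hypothesis, then confirm it lands in the intended branch of $\mathcal{E}$ and that the resulting protocol action matches the prescribed $G^{(t+0.5)}$. The delicate points are (i) making sure the magic constants in the branch conditions ($40$–$41$, $66$–$71$) separate the ``flip'' pairs from the ``$h$–$l$'' pairs from the ``rigid'' pairs cleanly, with no pair accidentally straddling a boundary, and (ii) handling the parity bookkeeping, since at half-integer times the auxiliary nodes of a $PCG$ disconnect from $h,l$ (shifting the $CN$ counts by $60$) and the flip edges of $A_j$-gadgets vs.\ $B_j$-gadgets are in opposite states. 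Once the structural invariant is shown to propagate from $t$ to $t+0.5$, the lemma follows immediately, and it is exactly the invariant needed to make Lemma~\ref{lem:Simulate_110} (the actual simulation of Rule $110$) go through.
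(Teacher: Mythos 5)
Your proposal follows essentially the same route as the paper's proof: induction on half-integer time steps, with a case analysis of pair types that sends internal always-on pairs ($20$ common neighbors) to the ``otherwise'' branch, flip special pairs ($40$--$41$ common neighbors) to branch~3 so they toggle, and rules out spurious new edges because only flip special pairs and $(h,l)$ pairs can accumulate $40$ common neighbors. One small correction to your bookkeeping: the special endpoints of the cross-$PCG$ always-on gadgets have only $20$--$24$ common neighbors (their $CN+|E|$ does not land in $[66,71]$), so they are preserved via branch~4 exactly like the other always-on pairs, not via branches~1--2.
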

\begin{proof}
We prove our claim using induction on the time step $t$. The base case $t=0$ holds by the construction of $G^{(0)}$. Suppose our claim holds for time step $t-0.5$, we show that it also holds for time step $t$. We first prove our claim for the pairs of nodes sharing an edge in $G^{(0)}$, except for the pairs $(h,l)$ of $PCG$s, as the Lemma makes no claim about them. Notice that it suffices to argue about always-on and flip gadgets, as this is the only way we added non-$(h,l)$ edges to $G^{(0)}$.

Let us first focus on the nodes that, at $G^{(0)}$, are contained in the same always-on $(x,y)$-gadget. We argue that for any two such nodes $x',y'$, the edge between them exists on time step $t$, except possibly for the $(x,y)$ edge; more formally, the unordered pair $\{x',y'\}$ is assumed to be different from $\{x,y\}$. By definition of the always-on gadget and the inductive hypothesis, $x'$ and $y'$ have exactly $20$ common neighbors in $G^{(t-0.5)}$, and thus they continue sharing an edge in $G^{(t)}$. Concerning the $x,y$ nodes of the gadget, we take cases depending on whether they also happen to be the two special endpoints of a flip $(x,y)$ gadget in $G^{(0)}$ or not. In the former case, by the inductive hypothesis, they have between $40$ and $41$ common neighbors in $G^{(t-0.5)}$, depending on the existence of edges not defined by our induction hypothesis. Thus, these edges always flip their status at $t$, as the lemma dictates. In the latter case they have between $20$ and $24$ common neighbors in $G^{(t-0.5)}$, depending on the existence of edges not defined by our induction hypothesis. Thus, these edges continue to exist in $G^{(t)}$. 

We are only left to argue about pairs of nodes with no edge connecting them in $G^{(0)}$. For a non-existent edge to become existent, it must be that its two endpoints have at least $40$ common neighbors, by the potential function. But, by the inductive hypothesis and the construction of $G^{(0)}$, this only happens for endpoints $x,y$ for which there exists a flip $(x,y)$-gadget (we already argued about such cases) and for endpoints $h,l$ of some $PCG$ (for which case our lemma does not claim anything). Thus, no other edge is ever created.
\end{proof}

\setcounter{lemma}{10}
\begin{lemma} \label{app:Simulate_110}
It holds that $A_j^{(t)}(i)=B_j^{(t)}(i)=cell^{(t)}(i)$ for $j\in\{1,2\}$ and all $i,t\in \mathbb{N}$.
\end{lemma}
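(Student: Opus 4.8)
The plan is to prove the statement by induction on $t$, the induction being driven entirely by Corollary~\ref{cor:eq} (more precisely, by the simultaneous invariant that for every cell $i$ and every integer time $t$, all four $PCG$s $A_1(i),A_2(i),B_1(i),B_2(i)$ share the common value $cell^{(t)}(i)$, and that the intermediate configuration $G^{(t+0.5)}$ is the "half-updated" one). The base case $t=0$ is immediate from the construction of $G^{(0)}$, where the value of each $CG(i)$ is set equal to the initial value of cell $i$ and all four constituent $PCG$s are initialized to that value. For the inductive step I would carry the invariant across in two half-steps, using the structural guarantees of Lemma~\ref{lem:tc-main-structure}: in $G^{(t)}$, an $(h,l)$ pair of an $A_j(i)$ has exactly $CN^{(t)}=70$ common neighbors and $CE^{(t)}(A_j(i))=8+A_j^{(t)}(i-1)+B_1^{(t)}(i)+B_2^{(t)}(i)+A_j^{(t)}(i+1)+B_j^{(t)}(i+1)$, while a $B_j(i)$ pair has $CN^{(t)}=6$; in $G^{(t+0.5)}$ these counts drop to $10$ and $66$ respectively because the auxiliary nodes detach in the flip gadgets. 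I would substitute these into the four-branch potential function and read off which branch fires.

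The core of the argument is a case analysis showing that, with $\alpha=\beta$, the potential comparisons exactly reproduce the Rule $110$ update table. On the half-step from $G^{(t)}$ to $G^{(t+0.5)}$, the $B_j(i)$ pairs are the ones that recompute: with $CN+|E|$ in the range governed by branch~1 or~2, the resulting potential is $\beta+$(something), equals $\beta$, or is $<\beta$ precisely according to the number of edges among connected $PCG$s, which by the inductive hypothesis equals a fixed constant plus $cell^{(t)}(i-1)+cell^{(t)}(i)+cell^{(t)}(i)+\dots$; I would check arithmetically that $A_j^{(t)}(i-1), A_1^{(t)}(i), A_2^{(t)}(i)$ entering $CE^{(t)}(B_j(i))=4+A_j^{(t)}(i-1)+A_1^{(t)}(i)+A_2^{(t)}(i)$, together with the current value $|E^{(t)}(h,l)|=cell^{(t)}(i)$ contributing to the $CN+|E|$ bucket, makes the edge in $B_j(i)$ become $cell^{(t+1)}(i)$ exactly when the left neighbor, the cell itself, and the right neighbor take the values dictated by Rule $110$. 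Then on the half-step from $G^{(t+0.5)}$ to $G^{(t+1)}$ the roles swap: the $A_j(i)$ pairs recompute, now reading the already-updated $B$-values via $CE$, so they copy $cell^{(t+1)}(i)$; one checks that simultaneously the $B_j(i)$ values are preserved (their $CN$ has returned to a value where branch~4, the "otherwise" branch with potential $\beta-1+|E|$, keeps the edge). Combining the two half-steps yields $A_j^{(t+1)}(i)=B_j^{(t+1)}(i)=cell^{(t+1)}(i)$, closing the induction.

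The main obstacle is purely the bookkeeping: one must verify that the magic constants in the four branches ($66$--$70$, $71$, $40$--$41$, the offsets $60$, $12$, $8$, $4$) are consistent across all combinations of the five neighbor values feeding into $CE$, and that no "wrong" branch is ever triggered for an $(h,l)$ pair at either an integer or a half-integer time — in particular that the $CN$ values $70,10,66,6$ never stray into the $40$--$41$ window or produce an out-of-range $CN+|E|$. I expect this to reduce to a finite check (at most $2^3$ Rule-$110$ inputs times the two $PCG$ types times the two half-steps), each case being a one-line inequality; the only subtlety is making sure Lemma~\ref{lem:tc-main-structure} is invoked correctly so that the "$\pm 1$" slack in common-neighbor counts coming from the undetermined $(h,l)$ edges of neighboring $PCG$s is already absorbed into the stated $CE$ formulas. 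Once that is set up, the rest is routine arithmetic.
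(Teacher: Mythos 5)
Your overall strategy matches the paper's: induction on integer time steps, splitting each into two half-steps, invoking Lemma~\ref{lem:tc-main-structure} to pin down the $CN$ and $CE$ values, and then a finite arithmetic check that the potential branches reproduce the Rule $110$ table. However, you have the roles of the $A$ and $B$ $PCG$s exactly reversed, and this is not a harmless relabeling --- your version of the bookkeeping cannot be made to work. At integer times the $A_j(i)$ pairs have $CN^{(t)}=70$ (so $CN+|E|$ lands in branch 1 or 2 and they \emph{update}), while the $B_j(i)$ pairs have $CN^{(t)}=6$ (so they fall into the ``otherwise'' branch $\beta-1+|E^{(t)}(u,v)|$ and are \emph{preserved}); the situation flips at half-integer times. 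So it is the $A$s that perform the Rule $110$ computation on the step $G^{(t)}\to G^{(t+0.5)}$, and the $B$s that act on the step $G^{(t+0.5)}\to G^{(t+1)}$ --- the opposite of what you wrote. You even quote the correct $CN$ values ($70$ vs.\ $6$ at integer times), which already contradict your assignment.

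The deeper reason the swap is fatal, rather than cosmetic, is the asymmetry of the $CE$ formulas. For $A_j(i)$ one has $CE=8+A_j(i-1)+B_1(i)+B_2(i)+A_j(i+1)+B_j(i+1)$, which encodes $cell(i-1)+2\,cell(i)+2\,cell(i+1)$ (with the own value also entering through $|E(h,l)|$ selecting between branches 1 and 2) --- exactly the three inputs Rule $110$ needs. For $B_j(i)$ one has $CE=4+A_j(i-1)+A_1(i)+A_2(i)$, which contains \emph{no information about the right neighbor} $cell(i+1)$; a $B$ pair is therefore structurally incapable of computing the Rule $110$ update, and in the paper it is only ever used to copy the already-updated value $A_j^{(t+0.5)}(i)$ (the potential $\beta+2A_j^{(t+0.5)}(i)+A_j^{(t+0.5)}(i-1)-2$ is at least $\beta$ iff $A_j^{(t+0.5)}(i)=1$, independently of the left term). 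So if you carried out your plan as stated, the first half-step would stall (wrong branch fires for $B$) and, even ignoring that, the $B$s would lack the data to produce $cell^{(t+1)}(i)$. Correcting the assignment of roles restores the paper's argument essentially verbatim.
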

\begin{proof}
It holds that $A_j^{(0)}(i)=B_j^{(0)}(i)=cell^{(0)}(i)$ by the initialization of our construction. Suppose that $A_j^{(t)}(i)=B_j^{(t)}(i)=cell^{(t)}(i)$ for an integer $t\geq 0$. By using induction we show that the lemma holds for time $t+1$.

First of all, we prove that $A_j^{(t+0.5)}(i)=cell^{(t+1)}(i)$. If $cell^{(t)}(i)=0$, then it holds that $cell^{(t+1)}(i)=cell^{(t)}(i+1)=A_j^{(t)}(i+1)=B_j^{(t)}(i+1)$, due to our inductive hypothesis. Furthermore, due to our inductive hypothesis it holds that $A_j^{(t)}(i)=B_1^{(t)}(i)=B_2^{(t)}(i)=0$. Thus, since $CN^{(t)}(A_j(i)(h),A_j(i)(l))=70$ and $|E^{(t)}(A_j(i)(h),A_j(i)(l))|=0$ (there is no edge between the $(h,l)$ nodes in $A_j(i)$) the potential between the pair of nodes is $\mathcal{E}^{(t)}(A_j(i)(h),A_j(i)(l))=CE^{(t)}(A_j(i)(h),A_j(i)(l))+\beta-10$. 
To find the potential of the pair of nodes $A_j(i)$ we compute:
\[CE^{(t)}(A_j(i)(h),A_j(i)(l))=8+A_j^{(t)}(i-1)+B_1^{(t)}(i)+B_2^{(t)}(i)+A_j^{(t)}(i+1)+B_j^{(t)}(i+1)=\]
\[8+cell^{(t)}(i-1)+2cell^{(t)}(i+1)\]
Thus, it follows that the potential of $A_j(i)(h)$ and $A_j(i)(l)$ is $\beta+cell^{(t)}(i-1)+2cell^{(t)}(i+1)-2$, which is at least $\beta$ if and only if $cell^{(t)}(i+1)=1$. Thus, in the case where $cell^{(t)}(i)=0$ we proved that indeed it holds that $A_j^{(t+0.5)}(i)=cell^{(t+1)}(i)$.

We use a similar reasoning for the case where $cell^{(t)}(i)=1$. In particular, since $CN^{(t)}(A_j(i))=70$ and $|E^{(t)}(A_j(i))|=1$ (there is an edge between the $(h,l)$ nodes in $A_j(i)$) the potential between the pair of nodes is $\mathcal{E}^{(t)}(A_j(i)(h),A_j(i)(l))=\beta +12-CE^{(t)}(A_j(i))$. We compute:
\[CE^{(t)}(A_j(i)(h),A_j(i)(h))=8+A_j^{(t)}(i-1)+B_1^{(t)}(i)+B_2^{(t)}(i)+A_j^{(t)}(i+1)+B_j^{(t)}(i+1)=\]
\[=10+cell^{(t)}(i-1)+2cell^{(t)}(i+1)\]
Thus, it follows that the potential of $A_j(i)(h)$ and $A_j(i)(l)$ is $\mathcal{E}^{(t)}(A_j(i))=\beta+2-cell^{(t)}(i-1)-2cell^{(t)}(i+1)$, which is less than $\beta$ if and only if $cell^{(t)}(i-1)=cell^{(t)}(i+1)=1$. This proves that $A_j^{(t+0.5)}(i)=cell^{(t+1)}(i)$.

It also holds that $A_j^{(t+1)}(i)=cell^{(t+1)}(i)$, because $CN^{(t+0.5)}(A_j(i)(h),A_j(i)(l))=10$, and thus $A_j^{(t+1)}(i)=A_j^{(t+0.5)}(i)$. Similarly, $B_j^{(t+0.5)}(i) = B_j^{(t)}(i)$ as $CN^{(t)}(B_j(i)(h),B_j(i)(l))=6$.

The potential of $B_j(i)$ at time $t+0.5$ is (recall that $CN^{(t)}(B_j(i)(h),B_j(i)(l))=66$):
\[\mathcal{E}^{(t+0.5)}(B_j(i)(h),B_j(i)(l))=CE^{(t+0.5)}(B_j(i)(h),B_j(i)(l))+\beta-6=\] \[\beta+2A_j^{(t+0.5)}(i)+A_j^{(t+0.5)}(i-1)-2\]
This is at least $\beta$ if and only if $A_j^{(t+0.5)}(i)=1$, which proves that $B_j^{(t+1)}(i)=cell^{(t+1)}(i)$. 
\end{proof}

\setcounter{theorem}{12}
\begin{theorem} \label{app:tc}
The \abd is Turing-Complete.
\end{theorem}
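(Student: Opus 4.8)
The plan is to reduce Turing-completeness to a faithful, \emph{stabilizing} simulation of Rule~$110$, which Cook proved Turing-complete \cite{DBLP:journals/compsys/000104}, and then to repair the single defect of the construction developed above. First I would recall what simulating Rule~$110$ buys us: Cook's encoding turns any Turing-machine computation into a Rule-$110$ orbit started from a suitable (ultimately periodic) configuration, with the machine halting iff the orbit settles into a fixed repeating pattern from which the output is read. Since our network is finite we can only carry a finite window of cells; I would take the window large enough to contain every cell that can influence the relevant part of the computation (Cook's encoding keeps the active region bounded for a space-bounded machine, and for the fully general statement one takes, per computation, a network as large as needed), with the two extremal cell-gadgets handled by boundary gadgets reproducing the background pattern, so that the interior behaves exactly as Rule~$110$ on $\mathbb{Z}$. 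Lemma~\ref{lem:Simulate_110} together with Corollary~\ref{cor:eq} then already gives that at every integer round $t$ the value $CG^{(t)}(i)$ of the $i$-th cell-gadget equals $cell^{(t)}(i)$, i.e. the interior of $G^{(t)}$ is in exact correspondence with the Rule-$110$ configuration at step $t$.

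The only remaining obstacle is the one flagged just before the theorem: each Rule-$110$ step is realized in two rounds $t\to t+0.5\to t+1$, the $A$-layer $PCG$s computing the update at the half-step and the $B$-layer copying it at the next integer step, with the flip gadgets serving as a clock that separates the two phases; because the special edges of the flip gadgets oscillate forever, the $(\alpha,\beta)$-Dynamics never stabilizes even when Rule~$110$ does. I would merge the two phases into one. The crucial observation is that the left/right asymmetry which a bare $PCG$ lacks is encoded not by the two-phase schedule but by the connection multiplicities (weight $1$ on the left cell, weight $2$ on the right cell through a doubled connection, the cell's own value entering through $|E^{(t)}(h,l)|$), so this asymmetry survives the removal of the clock. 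Concretely I would replace every flip gadget attached to a $B$-layer $PCG$ by an always-on gadget and rewire each $A_j(i)$ so that it reads the integer-time values $cell^{(t)}(i-1),cell^{(t)}(i),cell^{(t)}(i+1)$ directly from the $A$-layers of the neighbouring cell-gadgets (which hold exactly those values at integer times) rather than through the $B$-layer, keeping the $1{:}2$ left/right weighting; the threshold constants in the four branches of $\mathcal{E}^{(t)}(u,v)$ are then re-tuned to the new, now constant, common-neighbour counts $CN^{(t)}(h,l)$. With this rewiring a single round performs exactly one Rule-$110$ step, $CG^{(t+1)}(i)$ being the Rule-$110$ update of $\bigl(cell^{(t)}(i-1),cell^{(t)}(i),cell^{(t)}(i+1)\bigr)$, with no intermediate configuration; the proofs of Lemma~\ref{lem:tc-main-structure} (structural invariance: no gadget is destroyed and no spurious edge ever reaches potential $\ge\beta$) and of Lemma~\ref{lem:Simulate_110} go through with the half-steps simply deleted.

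Putting the pieces together: the orbit $G^{(0)},G^{(1)},\dots$ of the merged construction is, on its interior, step-for-step identical to the Rule-$110$ orbit on the chosen window, so the $(\alpha,\beta)$-Dynamics simulates Rule~$110$ and hence, by \cite{DBLP:journals/compsys/000104}, any Turing machine; moreover $G^{(t)}$ becomes eventually constant — the dynamics stabilizes — precisely when the simulated Rule-$110$ window freezes, i.e. when the simulated machine halts, at which point its output is recovered from the stabilized network (termination detection itself being explicitly outside the scope of the paper). I expect the main difficulty to be the rewiring of the preceding paragraph: one must simultaneously preserve the left/right asymmetry, re-establish the structural-invariance lemma so that the always-on and flip gadgets are not broken and no unintended edge attains potential $\beta$, and pick the branch constants of $\mathcal{E}$ so that every threshold fires correctly against the new common-neighbour counts — together with the comparatively routine but genuinely necessary bookkeeping for the finite window and its two boundary cell-gadgets.
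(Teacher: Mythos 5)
Your overall strategy (simulate Rule~$110$, then eliminate the half-steps so that stabilization of the dynamics coincides with convergence of the automaton) matches the paper's, and your reading of the construction — the $1{:}2$ left/right weighting, the flip gadgets as a clock, the $A$/$B$ two-phase update — is accurate. The gap is in the one step you yourself flag as the main difficulty: the merge. The paper does \emph{not} rewire the gadgets. It keeps $G^{(0)}$ and the two-phase logic intact and instead replaces the potential function by a composed one: since the old potential of a pair $(x,y)$ depends only on the radius-$1$ neighborhood, a node can, by reading out to distance $3$, first compute what the induced subgraph around $(x,y)$ \emph{would} look like at the suppressed half-step $t+0.5$, and then apply the old potential to that auxiliary graph to decide the edge at $t+1$. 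This is legal because the model allows the potential to depend on a constant radius (here $3$) and to perform polynomial-time computation in logarithmic working memory, and it inherits consistency, structural invariance and correctness from Lemmas~\ref{lem:tc-main-structure} and~\ref{lem:Simulate_110} essentially unchanged.

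Your alternative — delete the clock, make the $B$-layer gadgets always-on, and route the neighbour reads through the $A$-layers while ``keeping the $1{:}2$ weighting'' — runs into a concrete obstruction. The left/right asymmetry is carried entirely by connection multiplicities, and edges are undirected: every connection between a $PCG$ of cell $i$ and a $PCG$ of cell $i+1$ contributes one unit of right-weight to the former and one unit of left-weight to the latter. Summing over a cell boundary, the total right-weight of cell $i$'s $PCG$s equals the total left-weight of cell $(i+1)$'s $PCG$s, so in any translation-invariant design in which \emph{all} value-carrying $PCG$s must themselves decode $cell(i-1)+2\,cell(i+1)$, the per-$PCG$ requirement that right-degree exceed left-degree is impossible. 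The original construction escapes this only by making the $B$-$PCG$s asymmetric, sacrificial carriers with no access to cell $i+1$ at all; they cannot compute the Rule-$110$ update and must copy from the $A$-layer, which under a synchronous single-phase update leaves them one step stale — exactly the problem the half-step exists to solve. So the rewiring as described does not go through: you either need a genuinely different encoding of left versus right, or the paper's device of folding the two phases into the potential function itself. (Your remarks about finite windows and boundary gadgets are a reasonable refinement, but the paper does not address them and they are not where the difficulty lies.)
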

\begin{proof}
By Lemma~\ref{lem:tc-main-structure} and Corollary~\ref{cor:eq} it follows that Rule $110$ would be correctly simulated by the particular \abd constructed above, if the transitional non-integer time steps were missing, and thus the convergence of an instance of Rule $110$ would mean the stabilization of the constructed \abd. To achieve this, we simulate the two steps of the constructed \abd in one step based on the observation that the defined potential for each pair of nodes $x,y$ depends only on the graph induced by the nodes at distance at most $1$ from either $x$ or $y$.
As a result, if nodes $x$ and $y$ at time step $t$ could 'guess' what this induced graph would look like in the transitional, non-integer, time step $t+0.5$, they could immediately use this to deduce their potential in time step $t+0.5$. 

We are left to argue about how $x$ and $y$ get information about this induced graph. Notice that a node $u$ may get connected with another node $v$ at any time step $t'$ only if $d^{(t'-0.5)}(u,v) \le 2$. Thus, in order for $x$ and $y$ to be able at time step $t$, to know this induced graph at time step $t+0.5$, it suffices to compute the connections at time $t+0.5$ between all nodes $u$ for which $\min\{d^{(t)}(x,v), d^{(t)}(y,v)\} \le 2$. In turn, in order to compute such a potential, they need to have information about nodes at distance $1$ from these nodes that lie at distance at most $2$. In conclusion, it suffices to access all nodes at distance at most $3$ at time $t$; notice that by Lemma~\ref{lem:tc-main-structure} and the construction of $G^{(0)}$, there is a constant number of such nodes, for any pair $(x,y)$ and time $t$.

Therefore, the new \abd starts with the same $G^{(0)}$ and computes the new potential between any two nodes $x,y$ in two conceptual steps. In the first step, it uses the old potential function, and information from nodes at distance at most $3$ from either of them, to compute how the graph induced by all nodes $u$ for which $\min\{d^{(t)}(x,u), d^{(t)}(y,u)\} \le 2$ would look like at time $t+0.5$. Then, by applying the old potential function on this computed graph, it computes the final potential between $x$ and $y$, effectively simulating the transitional time step.  Therefore, the potential function only acquires information from nodes at a constant distance (at most $3$) from either $x$ or $y$, as required. It is also clear that it is network-agnostic, or in other words that it is designed without access to the topology of $G^{(0)}$.

To see that this new potential function is computationally symmetric, notice that the auxiliary graph is computed both by $x$ and by $y$ by accessing the same information and using the same computationally symmetric potential function, meaning both $x$ and $y$ end up with the same auxiliary graph. Then, they apply the same computationally symmetric function on this graph, meaning that they acquire the same value.

Finally, we have shown that at any time step, each node only has a constant number of neighbors. Therefore, the auxiliary graph also has a constant number of nodes, and we only need a constant number of words to represent the auxiliary graph. The computation of each such edge in the auxiliary graph, as well as the final computation, uses the old potential function; all these computations are using the same working memory. Thus, the new potential function respects the restriction of having a working memory at most (asymptotically) logarithmic in size, compared to the input memory (which contains the neighborhoods of $u$ and $v$), since the old potential function does as well. The time needed is also polynomial in the input size, as the same holds for the time needed to compute the old potential function.

\end{proof}

\end{document}